\declaretheorem{theorem}
\declaretheorem{proposition}
\theoremstyle{definition}
\newtheorem{definition}{Definition}
\newtheorem*{definition*}{Definition}
\newtheorem*{remark}{Remark}
\begin{document}

\title{On establishing learning separations between classical and quantum machine learning with classical data}

\author[1]{Casper Gyurik 
 \thanks{\href{mailto:c.f.s.gyurik@liacs.leidenuniv.nl}{c.f.s.gyurik@liacs.leidenuniv.nl}}}
\author[1]{Vedran Dunjko
\thanks{\href{mailto:v.dunjko@liacs.leidenuniv.nl}{v.dunjko@liacs.leidenuniv.nl}}}
\affil[1]{{\small applied Quantum algorithms (aQa), Leiden University, The Netherlands}} 

\date{\today}

\maketitle

\begin{abstract}
Despite years of effort, the quantum machine learning community has only been able to show quantum learning advantages for certain contrived cryptography-inspired datasets in the case of classical data. 
In this note, we discuss the challenges of finding learning problems that quantum learning algorithms can learn much faster than any classical learning algorithm, and we study how to identify such learning problems.
Specifically, we reflect on the main concepts in computational learning theory pertaining to this question, and we discuss how subtle changes in definitions can mean conceptually significantly different tasks, which can either lead to a separation or no separation at all.
Moreover, we study existing learning problems with a provable quantum speedup to distill sets of more general and sufficient conditions (i.e., ``checklists'') for a learning problem to exhibit a separation between classical and quantum learners.
These checklists are intended to streamline one’s approach to proving quantum speedups for learning problems, or to elucidate bottlenecks.
Finally, to illustrate its application, we analyze examples of potential separations (i.e., when the learning problem is build from computational separations, or when the data comes from a quantum experiment) through the lens of our approach.
\end{abstract}

\section{Introduction}
\label{sec:introduction}
Quantum machine learning (QML) is a bustling field~\cite{arunachalam:qlearning} with the potential to deliver quantum enhancements for practically relevant problems.
The most obvious examples of practical quantum advantage for learning problems occur when the data itself comprises genuine quantum states~\cite{huang:qadvantage}.
In contrast, when the data is classical the community has only been able to establish quantum advantages for cryptography-inspired datasets, which are of limited practical relevance~\cite{liu:dlp, servedio:q_c_learnability}.
In general, an important goal of the community is to find practically relevant learning problems for which one can prove that quantum learners have an advantage over classical learners.
Although the question is of high relevance in a popular area of research, progress here has been slow. 
One of the issues is that proving learning separations is cumbersome, even if one assumes unproven computational separations (e.g., that $\mathsf{BQP} \neq  \mathsf{BPP}$, or that computing expectation values stemming from depth-limited circuits is classically intractable). 
In particular, it is known that access to data can enable a classical computer to compute otherwise intractable functions (see Section \ref{subsubsec:power_data}). 
Another issue is that subtle changes in learning definitions can mean conceptually significantly different tasks, and can either lead to a separation or no separation at all. 
For example, if the objective is to learn a function from a set that contains functions which are known to be hard-to-evaluate for classical computers (even approximately), does this then always constitute a learning separation? 
If we demand that a classical learner has to \emph{evaluate} the learned function, then the answer may be yes, and learning separations may be easier to find. 
However, such an answer conflates computational hardness with hardness of learning, at least in an intuitive sense.
On the other hand, if we only demand the learner to provide a \textit{specification} of the function that underlies the data, then it is not at all clear what the relationship between computational hardness and learning hardness is.
Fortunately, related questions have been studied in classical computational learning theory literature in the 90s~\cite{kearns:clt, kearns:crypto}, which we can build on.

To facilitate progress in the understanding of quantum machine learning separations, in this note we $(i)$ reflect on the main concepts in computational learning theory pertaining to this question by discussing possible definitions of learning separations and their motivations, $(ii)$ discuss the relationship between learning hardness and computational hardness, and $(iii)$ provide two sets of sufficient criteria (i.e., two ``checklists'') that imply a provable
learning separation. 
The first checklist is summarized in Theorem~\ref{thm:recipe_separation}, and it is based on random-verifiable functions to make it applicable when data can be efficiently generated.
The second checklist is summarized in Theorem~\ref{thm:recipe_separation2}, and it is based on additional assumptions involving a number of computational complexity classes with the aim of being applicable to the setting where the data is generated by a quantum experiment.
These checklists can also help identify missing steps in proving learning separations for problems that are potential candidates for exhibiting such separations. 
The results of our work may facilitate the identification of practically relevant tasks where genuine quantum advantages can be formally substantiated.
While much of what we discuss is known to researchers in computational learning theory, since quantum machine learning is interdisciplinary we believe that it is still useful to review the known results, introducing examples from quantum machine learning literature when possible. 

\vspace{-10pt}
\paragraph{Organization of the note}
In Section~\ref{sec:definitions} we recall the definition of efficient learnability and we discuss what exactly constitutes a separation between classical and quantum learners.
Next, in Section~\ref{subsec:dlp} and Section~\ref{subsec:rsa}, we discuss two known examples of separations between classical and quantum learners~\cite{liu:dlp, servedio:q_c_learnability}.
Afterwards, in Section~\ref{sec:recipe}, we highlight the commonalities and differences between what we know quantum computers are good at and what is required for establishing learning separations, and we distill two sets of general and sufficient conditions (i.e., two ``checklists'') to more easily identify whether a problem has sufficient ingredients for a learning separation, or to elucidate the bottlenecks in proving this separation.
These checklists are not general characterizations of all possible learning separation.
Specifically, the first checklist (which is summarized in Theorem~\ref{thm:recipe_separation}) is distilled from the two examples of separations~\cite{liu:dlp, servedio:q_c_learnability}, and the second checklist (which is summarized in Theorem~\ref{thm:recipe_separation2}) is aimed to be applicable to the setting where data comes from a quantum process by involving a number of additional computational complexity classes. 
Finally, in Section~\ref{subsec:recipe_examples}, we provide examples of how our checklists can be used to prove learning separations or elucidate the bottlenecks in proving such a separation.
More precisely, we show that our checklists indeed captures the two separations of~\cite{liu:dlp, servedio:q_c_learnability}, and we elucidate the bottlenecks in proving separations for learning problems which intuitively are promising candidates for potential learning separations (i.e., when the learning problem is build up from a computational complexity separation, or when the data comes from a quantum experiment). 

\section{Definition of a learning separation}
\label{sec:definitions}

In this note we use the standard terminology of the (efficient) \textit{probably approximately correct} (PAC) learning framework, and we focus on the supervised learning setting (for an overview of the generative modelling setting see~\cite{sweke:gen_mod}).
In this framework a learning problem is defined by a family of \textit{concept classes} $\{\mathcal{C}_n\}_{n \in \mathbb{N}}$, where each concept class $\mathcal{C}_n$ consists of \textit{concepts} which are binary-valued functions on an \textit{input space} $\mathcal{X}_n$ (in this note we assume $\mathcal{X}_n$ is either $\{0,1\}^n$ or $\mathbb{R}^n$). 
As input the learning algorithm has access to a procedure $EX(c, \mathcal{D}_n)$ (sometimes called an \textit{oracle}) that runs in unit time, and on each call returns a labeled \textit{example} $(x, c(x))$, where $x \in \mathcal{X}_n$ is drawn according to a \textit{target distribution} $\mathcal{D}_n$.
Finally, the learning algorithm has associated to it a family of hypothesis classes $\{\mathcal{H}_n\}_{n \in \mathbb{N}}$, and its goal is to output a hypothesis $h \in \mathcal{H}_n$ -- which are also binary-valued functions on $\mathcal{X}_n$ --  that is in some sense ``close'' to the concept $c \in \mathcal{C}_n$ generating the examples.
Note that there is also the notion of \emph{proper} PAC learning, where one is only allowed to output a hypothesis from the concept class (i.e., the hypothesis class is the same as the concept class).
In the standard PAC learning framework the learning algorithm has to identify (and/or evaluate) such a good hypothesis using $\mathcal{O}\left(\mathrm{poly}(n)\right)$ many queries to $EX(c, \mathcal{D}_n)$, and the computational complexity (i.e., ``runtime'') of the learning algorithm is not considered. 
In this note however, we focus on the \emph{efficient} PAC learning framework, where the learning algorithm must output such a good hypothesis in \emph{time} $\mathcal{O}\left(\mathrm{poly}(n)\right)$ (note that this also implies that the learning algorithm can only use $\mathcal{O}\left(\mathrm{poly}(n)\right)$ many queries to $EX(c, \mathcal{D}_n)$).
The formal framework of PAC learning deals with the more well-known problem of binary-valued supervised learning. 
Although stated abstractly, the concepts can be thought of as pictures of cats (more precisely, abstract functions which attain value 1 for representations of cats), and the notion of the oracle captures the notion that we are typically given a training set, and no more. 
We formally define efficient PAC learnability as follows.

\begin{definition}[efficient PAC learnability]
\label{def:learnability}
A family of concept classes $\{\mathcal{C}\}_{n \in \mathbb{N}}$ is \textit{efficiently PAC learnable} under target distributions $\{\mathcal{D}_n\}_{n \in \mathbb{N}}$ if there exists a family of hypothesis classes $\{\mathcal{H}_n\}_{n \in \mathbb{N}}$ and learning algorithms $\{\mathcal{A}_n\}_{n \in \mathbb{N}}$ with the following property: for every concept $c \in \mathcal{C}_n$, and for all $0 < \epsilon < 1/2$ and $0 < \delta < 1/2$, if $\mathcal{A}_n$ is given access to $EX(c, \mathcal{D}_n)$ and $\epsilon$ and $\delta$, then $\mathcal{A}_n$ outputs a specification{\color{blue}\footnotemark[1]} of a hypothesis $h \in \mathcal{H}_n$ that with probability at least $1 - \delta$ (over the random examples drawn by calls to $EX(c, \mathcal{D}_n)$ and internal randomization of $\mathcal{A}_n$) satisfies
\[
\mathbb{P}_{x \sim \mathcal{D}_n}\big[h(x) \neq c(x)\big] \leq \epsilon.
\]
Moreover, the learning algorithm $\mathcal{A}_n$ must run in time $\mathcal{O}(n, \mathrm{size}(c){\color{blue}\footnotemark[2]}, 1/\epsilon, 1/\delta)$.
\end{definition}
\footnotetext[1]{The hypotheses (and concepts) are specified according to some enumeration $R: \cup_{n \in \mathbb{N}}\{0,1\}^n \rightarrow \cup_{n} \mathcal{H}_n$ (or, $\cup_{n} \mathcal{C}_n$) and by a ``specification of $h \in \mathcal{H}_n$'' we mean a string $\sigma \in \{0,1\}^*$ such that $R(\sigma) = h$ (see~\cite{kearns:clt} for more details).}
\footnotetext[2]{The runtime of the learning algorithm needs to scale at most polynomially in the size of a concept, which is defined as the size of the \emph{smallest} possible specification of the concept (see~\cite{kearns:clt} for more details).}

If the learning algorithm is a classical algorithm (or, a quantum algorithm), we say that the concept class is \textit{classically efficiently learnable} (or, \textit{quantumly efficiently learnable}, respectively).
It might seem unusual to consider the hypothesis class separately from the learning algorithm.
However, note that in practice one implicitly uses some fixed hypothesis class for the problem at hand.
For instance, in deep learning, the hypothesis class consists of all functions realizable by a deep neural network with some fixed architecture (i.e., enumerated by the weights), and the learning algorithm performs gradient descent to find the optimal weights. 
The hypothesis class used by the learning algorithm can thus be cleverly adapted to the concept class that we want to learn, so we do not want to place unnecessary restrictions on it, though we do not want to leave it completely unconstrained either.
Specifically, it turns out to be pointless if we constrain the learning algorithm to run in polynomial time, but we allow the hypothesis to run for superpolynomial time.
Namely, if we allow hypotheses to run for superpolynomial time, then any concept class that can be learned by a superpolynomial-time learning algorithm, can also be learned by a polynomial-time learning algorithm (i.e., the constraint that the learning algorithm runs in polynomial time is somehow vacuous). 
The reason that this holds comes from the fact that if a concept class $\{\mathcal{C}_n\}_{n \in \mathbb{N}}$ is learnable by a superpolynomial-time learning algorithm using some hypothesis class $\{\mathcal{H}_n\}_{n \in \mathbb{N}}$, then one can construct a new hypothesis class $\{\mathcal{H}'_n\}_{n \in \mathbb{N}}$ (whose concepts are enumerated by all sets of examples) that a polynomial-time learning algorithm can use to learn $\{\mathcal{C}_n\}_{n \in \mathbb{N}}$ {\color{blue}\footnotemark[3]}.
We therefore restrict ourselves to \emph{polynomially evaluatable} hypothesis classes~\cite{kearns:clt}, and we will make the distinction whether they are polynomially evaluatable using a quantum or classical computer.
\footnotetext[3]{See Appendix~\ref{appendix:poly_eval} for more details.}

\begin{definition}[polynomially evaluatable]
    A hypothesis class $\mathcal{H}$ is \emph{polynomially evaluatable} if there exists an evaluation algorithm $\mathcal{A}_{\mathrm{eval}}$ that on input  $x \in \mathcal{X}_n$ together with a specification{\color{blue}\footnotemark[1]} of any hypothesis $h \in \mathcal{H}_n$, outputs $\mathcal{A}_{\mathrm{eval}}(x, h) = h(x)$ in time $\mathcal{O}(n, \mathrm{size}(h))$.
\end{definition}

If the evaluation algorithm is a classical algorithm (or, a quantum algorithm), we say that the hypothesis class is \textit{classically polynomially evaluatable} (or, \textit{quantumly polynomially evaluatable}, respectively).
For example, the hypotheses could be specified by a polynomial-sized Boolean circuit, in which case there is a polynomial-time classical evaluation algorithm (e.g., a classical Turing machine that can simulate Boolean circuits).
On the other hand, the hypotheses could also be specified by a polynomial-depth quantum circuit, in which case there is an efficient quantum evaluation algorithm.
If the family of quantum circuits is universal, then the hypothesis class will be quantumly polynomially evaluatable, but not classically polynomially evaluatable (assuming $\mathsf{BPP}\neq\mathsf{BQP}$).

In some cases, the hypothesis class can be chosen in some particular way for a particular purpose.
Specifically, it can be of practical-relevance to \textit{fix} the hypothesis class used by the learning algorithm.
To give a physics-motivated example, when studying phases of matter one might want to identify what physical properties characterize the phase.
One can formulate this problem as finding a specification of the correct hypothesis selected from a hypothesis class consisting of possible \emph{order parameters}.
More precisely, we fix the hypotheses to be of a very special shape, which compute certain expectation values of ground states given a specification of a Hamiltonian.
In this setting, one might not necessarily want to evaluate the hypotheses, as they require one to prepare the ground state, which is generally intractable (even for a quantum computer).
However, considering hypothesis classes that are not efficiently evaluatable will not trivialize the problem in this setting.
The reason for this is that we cannot apply the construction discussed in Appendix~\ref{appendix:details} (which would render the polynomial-time restriction on the learning algorithm obsolete), since this requires us to be able to change the hypothesis class.

One may assume that there is only one way to define a learning separation.
However, it is in fact more subtle, and there are various definitions that each have operationally different meanings.
In particular, one has to be careful whether we constrain the hypothesis class to be  classically- or a quantumly polynomialy evaluatable.
Specifically, there are four categories of learning problems that we can consider:\ concept classes that are either \textit{classically- or quantumly efficiently learnable}, using an hypothesis class that is either \textit{classically- or quantumly polynomially evaluatable}. 
We denote these categories as $\mathsf{CC}, \mathsf{CQ}, \mathsf{QC}$, and $\mathsf{QQ}$, where the first letter signifies whether the concept class is classically- or quantumly efficiently learnable (i.e., there exists either an efficient classical or quantum learning algorithm), and the second letter signifies whether we use a classically- or quantumly polynomially evaluatable hypothesis class.
These distinctions are \emph{not} about the nature of the data (i.e., we only consider the setting where the examples are classical in nature) as it often occurs in literature, and even on the Wikipedia-page of quantum machine learning.
\begin{definition}[categories of learning problem -- free hypothesis class]\hspace{0pt}
\label{def:cat1}
\begin{itemize}[leftmargin=9pt]
    \item Let $\mathsf{CC}$ denote the set of tuples $\big(\{\mathcal{C}_n\}_{n \in \mathbb{N}},  \{\mathcal{D}_n\}_{n \in \mathbb{N}}\big)$ such that the family of concept classes~$\{\mathcal{C}_n\}_{n \in \mathbb{N}}$ is $\mathsf{\textbf{classically}}$ \textit{efficiently learnable} under target distributions $\{\mathcal{D}_n\}_{n \in \mathbb{N}}$ with a $\mathsf{\textbf{classically}}$ \textit{polynomially evaluatable} hypothesis class.
    
    \item Let $\mathsf{CQ}$ denote the set of tuples $\big(\{\mathcal{C}_n\}_{n \in \mathbb{N}},  \{\mathcal{D}_n\}_{n \in \mathbb{N}}\big)$ such that the family of concept classes~$\{\mathcal{C}_n\}_{n \in \mathbb{N}}$ is $\mathsf{\textbf{classically}}$ \textit{efficiently learnable} under target distributions $\{\mathcal{D}_n\}_{n \in \mathbb{N}}$ with a $\mathsf{\textbf{quantumly}}$ \textit{polynomially evaluatable} hypothesis class.
    
    \item Let $\mathsf{QC}$ denote the set of tuples $\big(\{\mathcal{C}_n\}_{n \in \mathbb{N}},  \{\mathcal{D}_n\}_{n \in \mathbb{N}}\big)$ such that the family of concept classes~$\{\mathcal{C}_n\}_{n \in \mathbb{N}}$ is $\mathsf{\textbf{quantumly}}$ \textit{efficiently learnable} under target distributions $\{\mathcal{D}_n\}_{n \in \mathbb{N}}$ with a $\mathsf{\textbf{classically}}$ \textit{polynomially evaluatable} hypothesis class.
    
    \item Let $\mathsf{QQ}$ denote the set of tuples $\big(\{\mathcal{C}_n\}_{n \in \mathbb{N}},  \{\mathcal{D}_n\}_{n \in \mathbb{N}}\big)$ such that the family of concept classes~$\{\mathcal{C}_n\}_{n \in \mathbb{N}}$ is $\mathsf{\textbf{quantumly}}$ \textit{efficiently learnable} under target distributions $\{\mathcal{D}_n\}_{n \in \mathbb{N}}$ with a $\mathsf{\textbf{quantumly}}$ \textit{polynomially evaluatable} hypothesis class.
    
\end{itemize}
\end{definition}

In Definition~\ref{def:cat1} we constrained the hypothesis class just in terms of the resources required to evaluate them. 
However, as discussed in the example of order parameters, it sometimes makes sense to further constrain and even fix the hypothesis class.
\begin{definition}[categories of learning problem -- fixed hypothesis class]\hspace{0pt}\\
\label{def:cat2}
For a fixed a hypothesis class $H = \{\mathcal{H}_n\}_{n \in \mathbb{N}}$, we define the categories of learning problems:
\begin{itemize}[leftmargin=9pt]
    \item Let $\mathsf{C}_H$ denote the set of tuples $\big(\{\mathcal{C}_n\}_{n \in \mathbb{N}},  \{\mathcal{D}_n\}_{n \in \mathbb{N}}\big)$ such that the family of concept classes~$\{\mathcal{C}_n\}_{n \in \mathbb{N}}$ is $\mathsf{\textbf{classically}}$ \textit{efficiently learnable} under target distributions $\{\mathcal{D}_n\}_{n \in \mathbb{N}}$ with the hypothesis class~$H$.
    
    \item Let $\mathsf{Q}_H$ denote the set of tuples $\big(\{\mathcal{C}_n\}_{n \in \mathbb{N}},  \{\mathcal{D}_n\}_{n \in \mathbb{N}}\big)$ such that the family of concept classes~$\{\mathcal{C}_n\}_{n \in \mathbb{N}}$ is $\mathsf{\textbf{quantumly}}$ \textit{efficiently learnable} under target distributions $\{\mathcal{D}_n\}_{n \in \mathbb{N}}$ with the hypothesis class~$H$.
\end{itemize}
\end{definition}

We remark that our definitions do not talk about the computational tractability of evaluating the concept class, which will be discussed shortly.
We now proceed with a few straightforward observations.
Firstly, since any efficient classical algorithm can be simulated by an efficient quantum algorithm it is clear that $\mathsf{CC} \subseteq \mathsf{CQ}$, $\mathsf{CC} \subseteq \mathsf{QC}$, $\mathsf{CC} \subseteq \mathsf{QQ}$, $\mathsf{CQ} \subseteq \mathsf{QQ}$, $\mathsf{QC} \subseteq \mathsf{QQ}$, and $\mathsf{C}_H \subseteq \mathsf{Q}_H$.
Secondly, if the hypothesis class is quantumly polynomially evaluatable, then it does not matter whether we constrain the learning algorithm to be a classical- or a quantum algorithm.
More precisely, any learning problem that is quantumly efficiently learnable using a quantumly polynomially evaluatable hypothesis class $\{\mathcal{H}_n\}_{n \in \mathbb{N}}$ is also classically efficiently learnable using another quantumly polynomially evaluatable hypothesis class $\{\mathcal{H}'_n\}_{n \in \mathbb{N}}$, and vice versa.
The reason for this is that we can delegate the quantum learning algorithm onto the evaluation of the hypotheses.
To see why this holds, consider a learning problem that is efficiently learnable by a quantum learning algorithm $\mathcal{A}^q$ using a quantumly polynomially evaluatable hypothesis class $\{\mathcal{H}_n\}_{n \in \mathbb{N}}$.
Next, consider the hypothesis class $\{\mathcal{H}'_n\}_{n \in \mathbb{N}}$ whose hypotheses are enumerated by all possible sets of examples, and each hypothesis runs the quantum learning algorithm $\mathcal{A}^q$ on its corresponding set of examples, after which it evaluates the hypothesis that $\mathcal{A}^q$ outputs.
Finally, consider the classical learning algorithm that queries the oracle and outputs a specification of the hypothesis in $\{\mathcal{H}'_n\}_{n \in \mathbb{N}}$ that corresponds to the obtained set of examples.
Note that this \textit{classical} learning algorithm can efficiently learn the original learning problem (which a priori was only quantumly efficiently learnable).
This is summarized in the lemma below, and we provide more details in Appendix~\ref{appendix:details}.

\begin{restatable}{lemma}{CQequalsQQ}
\label{lemma:cq=qq}
$\mathsf{CQ} = \mathsf{QQ}$.
\end{restatable}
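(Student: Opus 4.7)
The plan is to prove the nontrivial inclusion $\mathsf{QQ} \subseteq \mathsf{CQ}$ by \emph{delegating} the quantum learning algorithm into the evaluation algorithm of a new hypothesis class; the reverse inclusion $\mathsf{CQ} \subseteq \mathsf{QQ}$ is immediate since any efficient classical learner is trivially simulated by a quantum one.

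For the main direction, suppose $(\{\mathcal{C}_n\}_{n \in \mathbb{N}},\{\mathcal{D}_n\}_{n \in \mathbb{N}}) \in \mathsf{QQ}$ via a quantum learning algorithm $\mathcal{A}^q$ that outputs specifications of hypotheses in a quantumly polynomially evaluatable hypothesis class $\{\mathcal{H}_n\}_{n \in \mathbb{N}}$ with evaluation algorithm $\mathcal{A}^q_{\mathrm{eval}}$. I would introduce a new hypothesis class $\{\mathcal{H}'_n\}_{n \in \mathbb{N}}$ whose hypotheses are indexed by pairs $(S,r)$, where $S$ is a multiset of labeled examples of polynomially bounded size and $r$ is a bitstring playing the role of a random seed. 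The associated quantum evaluation algorithm, on input $x \in \mathcal{X}_n$ and specification $(S,r)$, first simulates $\mathcal{A}^q$ on example set $S$ using $r$ as its internal randomness to obtain a specification $\sigma_h$ of some $h \in \mathcal{H}_n$, and then returns $\mathcal{A}^q_{\mathrm{eval}}(x,\sigma_h)$.

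Next I would exhibit a classical learner $\mathcal{A}^c$ for $\{\mathcal{C}_n\}_{n \in \mathbb{N}}$ that uses $\{\mathcal{H}'_n\}_{n \in \mathbb{N}}$: on inputs $\epsilon,\delta$ and oracle access to $EX(c,\mathcal{D}_n)$, it draws as many labeled examples as $\mathcal{A}^q(\epsilon,\delta)$ would consume into a multiset $S$, samples a uniform random string $r$ of the appropriate length, and outputs $(S,r)$ as the specification. This procedure is purely classical and runs in time polynomial in $n,\mathrm{size}(c),1/\epsilon,1/\delta$. Correctness follows directly from the PAC guarantee of $\mathcal{A}^q$: with probability at least $1-\delta$ over the joint distribution of $S$ and $r$, the hypothesis $h$ that $\mathcal{A}^q$ would have output satisfies $\mathbb{P}_{x \sim \mathcal{D}_n}[h(x) \neq c(x)] \leq \epsilon$, and by construction $h'_{(S,r)} = h$ as a function on $\mathcal{X}_n$.

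The main obstacle to watch out for is handling the internal randomness of $\mathcal{A}^q$ so that each $h'_{(S,r)}$ is a well-defined deterministic function rather than a random process; including the seed $r$ explicitly in the specification addresses this cleanly. One should also check that $\{\mathcal{H}'_n\}_{n \in \mathbb{N}}$ remains quantumly polynomially evaluatable (the simulation of $\mathcal{A}^q$ and the subsequent call to $\mathcal{A}^q_{\mathrm{eval}}$ are both polynomial time) and that $\mathrm{size}(h'_{(S,r)})$ is polynomially bounded in $n$ and the relevant parameters, so that the polynomial scaling requirement in Definition~\ref{def:learnability} is respected.
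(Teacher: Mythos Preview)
Your proposal is correct and follows essentially the same delegation argument as the paper: push the quantum learner $\mathcal{A}^q$ into the evaluation of a new quantumly polynomially evaluatable hypothesis class indexed by sets of examples, so that the classical learner merely collects examples and outputs them as the specification. Your version is in fact slightly more careful than the paper's, since you explicitly thread the random seed $r$ through the specification to make each hypothesis a well-defined function, a point the paper's proof glosses over.
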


We would like to reiterate that it is critical that one is allowed to change the hypothesis class when mapping a problem in $\mathsf{QQ}$ to $\mathsf{QC}$.
Note that if the hypothesis class is fixed (as in the order parameter example and Definition~\ref{def:cat2}), then such a collapse does not happen since we are not allowed to change the hypothesis class.
In particular, in the case of Definition~\ref{def:cat2} it is both allowed and reasonable to fix the hypothesis class to be something that is not (classically or quantumly) polynomially evaluatable.
In this note, we focus on separations in terms of the categories defined in Definition~\ref{def:cat1}, and leave separations in terms of the categories defined in Definition~\ref{def:cat2} for future work. 
Having studied the obvious relationships between the different categories of concept classes, we are now ready to precisely specify what it means for a learning problem to exhibit a separation between classical and quantum learning algorithms.
\begin{definition}[learning separation]
A tuple $L = \big(\{\mathcal{C}_n\}_{n \in \mathbb{N}},  \{\mathcal{D}_n\}_{n \in \mathbb{N}}\big)$ is said to exhibit a 

\begin{itemize}
    \item $\mathsf{CC}/\mathsf{QC}$ separation if $L \in \mathsf{QC}$ and $L \not\in \mathsf{CC}$.
    
    \item $\mathsf{CC}/\mathsf{QQ}$ separation if $L \in \mathsf{QQ}$ and $L \not\in \mathsf{CC}$.
\end{itemize}

\end{definition}

Firstly, note that due the previously listed inclusions of classes any $\mathsf{CC/QC}$ separation directly implies a $\mathsf{CC/QQ}$ separation.
Secondly, one could argue that a slight drawback of $\mathsf{CC}/\mathsf{QQ}$ separations is that they can be less about ``learning proper''.
For instance, consider a concept class made up of a single concept that is classically intractable to evaluate -- even in the presence of data -- yet it can be efficiently evaluated by a quantum algorithm (e.g., based on complexity-theoretic assumptions).
Note that this singleton concept class is quantumly efficiently learnable using a quantumly polynomially evaluatable hypothesis class (i.e., the singleton concept class).
On the other hand, it is not classically efficiently learnable using any classically polynomially evaluatable hypothesis class (since this would violate the classical intractability of evaluating the concept).
However, note that the quantum learner \textit{requires no data} to learn the concept class, so it is hard to argue that this is a genuine learning problem. 
Moreover, such a learning separation is not really introducing new insights under the widely-believed complexity-theoretic assumption that there are classically intractable functions that are efficiently evaluatable by a quantum computer. 
This is different if we only allow the quantum learning algorithm to use hypothesis classes with a polynomial time classical evaluation algorithm, as the above example will not be learnable by such a quantum learning algorithm anymore.
However, note that $\mathsf{CC}/\mathsf{QQ}$ (and $\mathsf{CC}/\mathsf{QC}$) separations could still be possible for concept classes that are efficiently evaluatable.
For instance, there are concept classes that are efficiently evaluatable yet not classically efficiently learnable, such as the class of polynomially-sized logarithmic-depth Boolean circuits~\cite{kearns:clt}.
However, in these well known cases the concept classes are also not quantumly efficiently learnable~\cite{arunachalam:hardness_circuits}, and it remains an open question whether one could restrict these concept classes such that they become quantumly efficiently learnable.

Note that an important difference between $\mathsf{CC}/\mathsf{QQ}$ and $\mathsf{CC}/\mathsf{QC}$ separations is what task you require the quantum algorithm for (i.e., what task needs to be classically intractable yet efficiently doable using a quantum algorithm).
In the case of $\mathsf{CC}/\mathsf{QC}$ separations, one needs to show that one really needs a quantum algorithm to \textit{specify} how one would label unseen examples using a classical algorithm.
On the other hand, in the case of $\mathsf{CC}/\mathsf{QQ}$ separations, one needs to show that you really need a quantum algorithm to correctly \textit{evaluate} (i.e., predict) the labels of unseen examples.

Having discussed the subtleties regarding the choice of definition, we now proceed to study two learning separations between classical and quantum learners.
First, we discuss the discrete logarithm concept class studied in~\cite{liu:dlp}, which to the best of our knowledge exhibits a $\mathsf{CC/QQ}$ separation (i.e., it is unclear whether it also exhibits a $\mathsf{CC/QC}$ separation).
It is important to note that the goal of the authors was different in that they further show that a \emph{general-purpose} quantum learning algorithm (i.e., a quantum kernel method) can learn this concept class.
Afterwards, we discuss a concept class based on the RSA cryptosystem that has been studied in~\cite{servedio:q_c_learnability, kearns:clt}.
This concept class exhibits a $\mathsf{CC/QC}$ separation, as a quantum learning algorithm can efficiently learn it using an hypothesis class that is classically polynomially evaluatable.

\subsection{The discrete logarithm concept class}
\label{subsec:dlp}

In this section we discuss the discrete logarithm concept class studied in~\cite{liu:dlp}.
In this work, the authors prove a $\mathsf{CC/QQ}$ separation for the following \emph{discrete logarithm concept class}.

\begin{definition}[Discrete logarithm concept class~\cite{liu:dlp}]
\label{def:dlp}
Fix an $n$-bit prime number $p$ and a generator $a$ of the (multiplicative) group $\mathbb{Z}_p^*$. We define the \emph{discrete logarithm concept class} as $\mathcal{C}^{\mathrm{DLP}}_n = \{c_{i}\}_{i \in \mathbb{Z}^*_p}$, where we define 
\begin{align}
\label{eq:c_dlp}
     c_i(x) = \begin{cases}+1, & \text{if }\log_a x \in [i, i + \frac{p-3}{2}],\\ -1, & \text{else.} \end{cases}
\end{align}
\end{definition}
\begin{remark}
Here $\log_a x$ denotes the discrete logarithm of $x$ with respect to the generator $a$. That is, $\log_a x$ is the smallest integer $\ell$ such that $a^\ell \equiv x \mod p.$
\end{remark}

The authors of~\cite{liu:dlp} show that a general-purpose quantum learning algorithm (i.e., a quantum kernel method) can efficiently learn $\{\mathcal{C}_n^{\mathrm{DLP}}\}_{n \in \mathbb{N}}$ under the uniform distribution.
The hypothesis class that they use is quantumly polynomially evaluatable, and to the best of our knowledge it is unknown whether this class could be learned efficiently using a classically polynomially evaluatable hypothesis class.
Additionally, they show that under the \emph{Discrete Logarithm Assumption} (a standard assumption in cryptography which states that computing the discrete logarithm even on a $\frac{1}{2} + \frac{1}{\mathrm{poly}(n)}$ fraction of possible inputs is classically intractable) no classical learning algorithm can efficiently learn $\{\mathcal{C}_n^{\mathrm{DLP}}\}_{n \in \mathbb{N}}$ using any classically polynomially evaluatable hypothesis class.
It is also useful to mention that in the case of the discrete logarithm, the weaker assumption that it is hard to compute the discrete logarithm in the worst case already implies that it is hard to evaluate it on this small fraction~\cite{blum:hardness}.
This is due to a so-called \textit{worst-to-average-case reduction}, which we will discuss in more detail in Section~\ref{sec:recipe}.

\begin{theorem}[\cite{liu:dlp}]
$L_{\mathrm{DLP}} = \big( \{\mathcal{C}_n^{\mathrm{DLP}}\}_{n \in \mathbb{N}}, \{\mathcal{D}^U_n\}_{n \in \mathbb{N}})$ exhibits a $\mathsf{CC/QQ}$ separation, where $\mathcal{D}^U_n$ denotes the uniform distribution over $\{0,1\}^n$.
\end{theorem}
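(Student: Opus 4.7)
The strategy is to establish the two conditions $L_{\mathrm{DLP}} \in \mathsf{QQ}$ and $L_{\mathrm{DLP}} \notin \mathsf{CC}$ separately.

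For $L_{\mathrm{DLP}} \in \mathsf{QQ}$, I would take the hypothesis class to be the concept class itself, $\mathcal{H}_n = \mathcal{C}_n^{\mathrm{DLP}}$, with each $h_j$ specified by its index $j \in \mathbb{Z}_p^*$. This class is quantumly polynomially evaluatable because Shor's algorithm computes $\log_a x$ in $\mathrm{poly}(n)$ time, after which checking membership in the half-arc $[j, j+(p-3)/2]$ is trivial. A proper quantum learner then draws $\mathcal{O}((n + \log(1/\delta))/\epsilon)$ labeled examples $(x_k, c_i(x_k))$, applies Shor to recover $\ell_k = \log_a x_k$, and outputs any index $j$ whose half-arc is consistent with the observed labels on the $\ell_k$. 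Since this is proper learning of a finite concept class of size $2^{\Theta(n)}$, standard Occam bounds deliver $(\epsilon,\delta)$-PAC learnability in time $\mathrm{poly}(n, 1/\epsilon, 1/\delta)$.

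For $L_{\mathrm{DLP}} \notin \mathsf{CC}$, I would argue by contradiction from the Discrete Logarithm Assumption. Suppose some classical efficient learner $\mathcal{A}$ with a classically polynomially evaluatable hypothesis class PAC-learns $\{\mathcal{C}_n^{\mathrm{DLP}}\}_{n \in \mathbb{N}}$ under $\mathcal{D}_n^U$. The key enabling observation is that labeled examples can be produced classically without any DL oracle: sample $\ell$ uniformly from $\mathbb{Z}_{p-1}$, set $x = a^\ell \bmod p$ (so $x$ is uniform on $\mathbb{Z}_p^*$ with $\log_a x = \ell$ known), and compute $c_i(x)$ from $\ell$ directly. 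Focus on the single concept $c_0$, whose value records whether $\log_a x$ lies in the first half of $\mathbb{Z}_{p-1}$; up to trivial reindexing, this is precisely the most-significant-bit function of $\log_a$. Feeding $\mathcal{A}$ the simulated $EX(c_0, \mathcal{D}_n^U)$ oracle with $\epsilon = \delta = 1/8$ yields, in polynomial classical time, a hypothesis $h$ satisfying $h(x^*) = c_0(x^*)$ with probability at least $3/4$ over the uniformly random challenge $x^*$ and the reduction's own randomness. Invoking the hard-core-bit theorem of Blum--Micali~\cite{blum:hardness}, such a one-bit predictor with constant advantage can be bootstrapped into a polynomial-time classical algorithm that recovers $\log_a x^*$ on a $1/2 + 1/\mathrm{poly}(n)$ fraction of inputs, directly contradicting the DLA.

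The main conceptual obstacle is this final step: extracting a complete discrete-log algorithm from a single one-bit predictor is the heart of the Blum--Micali machinery, and its applicability rests on the half-arc $[0,(p-3)/2]$ coinciding with the hard-core bit of the discrete-log one-way function---an evidently deliberate design choice in Definition~\ref{def:dlp}. On the learning side, a couple of routine details deserve a check: that $\ell \mapsto a^\ell \bmod p$ is a bijection pushing the uniform distribution on $\mathbb{Z}_{p-1}$ forward to the required $\mathcal{D}_n^U$ on $\mathbb{Z}_p^*$, and that the ``for every concept in the class'' PAC guarantee of $\mathcal{A}$ applies to the particular $c_0$ we simulate. Neither is problematic, so the only substantive work in the proof is the cryptographic hard-core reduction, which is imported wholesale from classical computational learning theory.
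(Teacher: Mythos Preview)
Your proposal is correct and aligns with the paper's treatment: both establish quantum learnability via Shor's algorithm, both exploit that examples can be simulated classically by sampling $\ell$ uniformly and computing $a^\ell$, and both invoke the Blum--Micali hard-core-bit reduction to derive classical hardness from the Discrete Logarithm Assumption.

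The only noteworthy difference is organizational. The paper routes the argument through its general checklist (Theorem~\ref{thm:recipe_separation}), decomposing $c_i = f_i \circ g_n^{-1}$ with $g_n(x) = a^x$, and for the classical-intractability criterion it reconstructs $g_n^{-1}(x)$ from evaluations of \emph{several} concepts $\{c_j\}_{j\in J}$ via a binary-search over half-arcs. You instead work directly with the single concept $c_0$, observe that it is already the most-significant-bit predicate of $\log_a$, and apply Blum--Micali immediately. Your route is slightly more economical (no interval-halving step), while the paper's route is designed to illustrate that the separation fits its general template. One small point to tighten in your write-up: the phrase ``recovers $\log_a x^*$ on a $1/2 + 1/\mathrm{poly}(n)$ fraction of inputs'' conflates the predictor's advantage with the output of the reduction; Blum--Micali turns a $(1/2 + 1/\mathrm{poly})$-predictor for the MSB into a full discrete-log algorithm, and it is the latter that contradicts the DLA.
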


To the best of our knowledge, it is unknown whether the discrete logarithm concept class also exhibits a $\mathsf{CC/QC}$ learning separation.
As mentioned before, they way to resolve this is by seeing whether a quantum learning algorithm can still efficiently learn $\{\mathcal{C}_n^{\mathrm{DLP}}\}_{n \in \mathbb{N}}$ using a classically polynomially evaluatable hypothesis class.
In the next section, we will discuss an example of a concept class that does exhibit a $\mathsf{CC/QC}$ separation, since it can efficiently be learned by a quantum learner using a classically polynomially evaluatable hypothesis class.

\subsection{The cube root concept class}
\label{subsec:rsa}


In this section, we discuss a concept class based on the RSA cryptosystem that exhibits a $\mathsf{CC/QC}$ separation.
More specifically, this concept class is quantumly efficiently learnable using a classically polynomially evaluatable hypothesis class, whereas no classical learning algorithm can do so efficiently.
The concept class comes from~\cite{kearns:clt}, but there exist similar concept classes that are also based on the RSA cryptosystem (or more generally, on the hardness of factoring Blum integers{\color{blue}\footnotemark[4]}) which all exhibit a $\mathsf{CC/QC}$ separation (e.g., see~\cite{kearns:crypto}).
The fact that these concept classes exhibit a $\mathsf{CC/QC}$ separation was first observed in~\cite{servedio:q_c_learnability}, though using different terminology.
\footnotetext[4]{$N \in \mathbb{N}$ is a Blum integer if $N = pq$, where $p$ and $q$ are distinct prime numbers congruent to $3 \mod 4$.}

\begin{definition}[Cube root concept class~\cite{kearns:clt}]
\label{def:cube-root}
Fix an $n$-bit integer $N = pq${\color{blue}\footnotemark[5]}, where $p$ and $q$ are two $\lfloor n/2\rfloor$-bit primes such that $\gcd\big(3, (p-1)(q-1)\big) = 1$. We define the \emph{cube root concept class} as $\mathcal{C}_n^{\mathrm{root}} = \{c_{i}\}_{i \in [n]}$, where 
\[
c_i(x) = \text{ the $i$th bit of the binary representation of $f_N^{-1}(x)$},
\]
and the function $f_N:\mathbb{Z}^*_N \rightarrow \mathbb{Z}_N^*$ is given by $f_N(x) = x^3 \mod N$.
\end{definition}
\begin{remark}
By requiring $\gcd\big(3, (p-1)(q-1)\big) = 1$, we ensure that $f_N^{-1}$ exists.
\end{remark}
\footnotetext[5]{In our scenario, the integer $N$ is known to the learner beforehand but $p$ and $q$ are not.}

It is important to note that $f_N^{-1}$ (which is used to construct the concepts) is of the form 
\begin{align}
\label{eq:d}
f_N^{-1}(y) = y^{d^*} \mod N,
\end{align}
for some $d^*$ that only depends on $N${\color{blue}\footnotemark[6]}.
The \emph{Discrete Cube Root Assumption} states that if only given $x$ and $N$, then computing $f_N^{-1}(x)$ is classically intractable (which will prevent a classical learning algorithm from efficiently learning $\{\mathcal{C}_n^{\mathrm{root}}\}_{n \in \mathbb{N}}$).
However, if also given $d^*$, then computing $f_N^{-1}$ suddenly becomes classically tractable.
Using Shor's algorithm a quantum learning algorithm can efficiently compute $d^*$ following the standard attack on the RSA cryptosystem.
Thus, the concept class $\{\mathcal{C}_n^{\mathrm{root}}\}_{n \in \mathbb{N}}$ can be efficiently learned by a quantum learning algorithm using the classically polynomially evaluatable hypothesis class $\{\mathcal{H}_n\}_{n \in \mathbb{N}}$, where
\[
\mathcal{H}_n = \big\{f_{d,i}(x) = \text{$i^\text{th}$ bit of }x^d \mod N\text{ }\big|\text{ }d\in [N],\text{ }i \in [n]\big\},
\]
The evaluation algorithm simply computes the $i$th bit of $x^{d} \mod N$ on input $(d, i)$ and $x$ (which can clearly be done in polynomial time using a classical computer).
It is worthwhile to mention that in this example we see the relevance of how the concepts are specified.
The specifications ``$f^{-1}_{N}$ where $f(x) = x^3$'' and ``$f_N^{-1} = x^{d^*}$'' refer to the same functions, yet computing them is in one case classically tractable, and in the other case classically intractable (under standard complexity-theoretic assumptions).
Note that one still has to learn which bit of $x^{d^*}$ is generating the examples, which implies that the learner thus really requires data.
Finally, under the aforementioned \emph{Discrete Cube Root Assumption}, no classical learner can efficiently learn the concept class $\{\mathcal{C}_n^{\mathrm{DLP}}\}_{n \in \mathbb{N}}$~\cite{kearns:clt}.  
These observations are summarized in the following theorem.
\footnotetext[6]{In cryptographic terms, $d^*$ is the private decryption key corresponding to the public encryption key $e=3$ and public modulus $N$ in the RSA cryptosystem.}

\begin{theorem}[\cite{servedio:q_c_learnability, kearns:clt}]
$L_{\mathrm{root}} = \big( \{\mathcal{C}_n^{\mathrm{root}}\}_{n \in \mathbb{N}}, \{\mathcal{D}^U_n\}_{n \in \mathbb{N}})$ exhibits a $\mathsf{CC/QC}$ separation, where $\mathcal{D}^U_n$ denotes the uniform distribution over $\{0,1\}^n$.
\end{theorem}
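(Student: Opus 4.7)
The plan is to establish the two inclusions that jointly constitute a $\mathsf{CC/QC}$ separation: an efficient quantum learner with a classically polynomially evaluatable hypothesis class (to show $L_{\mathrm{root}} \in \mathsf{QC}$), and the absence of any efficient classical learner using any classically polynomially evaluatable hypothesis class (to show $L_{\mathrm{root}} \notin \mathsf{CC}$).

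For the upper bound I would exhibit the following quantum learner paired with the hypothesis class $\mathcal{H}_n = \{f_{d,i}\}$ described in the text. On input $N$ and oracle $EX(c_i, \mathcal{D}^U_n)$, the algorithm (i) factors $N$ with Shor's algorithm in polynomial time, (ii) computes $\varphi(N) = (p-1)(q-1)$ and then $d^* = 3^{-1} \bmod \varphi(N)$ via the extended Euclidean algorithm, and (iii) draws $O(\log(n/\delta))$ labeled samples and outputs the unique index $i$ for which $f_{d^*, i}$ matches every sampled label. Disambiguation succeeds because for any $i' \neq i$ the functions $c_i$ and $c_{i'}$ disagree on a constant fraction of inputs under the uniform distribution, so with only logarithmically many samples all wrong candidates are ruled out with confidence $1-\delta$. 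Since $c_i = f_{d^*,i}$ pointwise on $\mathbb{Z}_N^*$, the emitted hypothesis has zero error modulo the negligible statistical gap between the uniform distributions on $\{0,1\}^n$ and on $\mathbb{Z}_N^*$; classical polynomial evaluatability of $\mathcal{H}_n$ is immediate from polynomial-time modular exponentiation.

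For the lower bound I would argue by contrapositive and reduce to the Discrete Cube Root Assumption. The crucial observation is that an example oracle for any $c_i$ can be simulated \emph{without} knowing $d^*$: sample $y$ uniformly from $\mathbb{Z}_N^*$, set $x = y^3 \bmod N$, and return $(x, i\text{th bit of } y)$; because cubing is a bijection on $\mathbb{Z}_N^*$ under the gcd condition on $N$, the produced $x$ is uniform and the label is by construction $c_i(x)$. Assuming a hypothetical efficient classical learner using some classically polynomially evaluatable hypothesis class, I would invoke it $n$ times (once per bit index $i$) with accuracy and confidence parameters $\epsilon = \delta = 1/(4n)$, obtaining hypotheses $h_1, \ldots, h_n$ that are jointly evaluatable in polynomial time. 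A union bound then yields that for a uniform random $x \in \mathbb{Z}_N^*$, the vector $(h_1(x), \ldots, h_n(x))$ equals $f_N^{-1}(x)$ with probability at least $3/4$, giving an efficient classical inverter and contradicting the DCRA.

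The main obstacle lies in the lower-bound reduction. Two subtleties have to be handled with care: first, the $n$ per-bit learners must be combined to output \emph{simultaneously correct} bits on a single input $x$, which forces a union bound driving both $\epsilon$ and $\delta$ below $1/(2n)$ while keeping the overall runtime polynomial; second, the version of the DCRA invoked here is the average-case one over uniform $x \in \mathbb{Z}_N^*$, so one should be explicit that no worst-to-average reduction is needed. A smaller point is the mismatch between the nominal input space $\{0,1\}^n$ and $\mathbb{Z}_N^*$; this introduces only a negligible additive error because $|\mathbb{Z}_N^*|/2^n = \Omega(1)$ for $n$-bit $N$, and can be absorbed into the accuracy parameter throughout.
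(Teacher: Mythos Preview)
Your proposal is correct and follows essentially the same approach as the paper's discussion in Section~\ref{subsec:rsa} and its recapitulation via Theorem~\ref{thm:recipe_separation} in Section~\ref{subsubsec:recipe_example_dlp-3root}: the quantum learner factors $N$ with Shor to recover $d^*$ and then identifies the bit index from data, while classical non-learnability is obtained by simulating the example oracle via $(y^3 \bmod N,\, i\text{th bit of }y)$ for uniform $y$ and assembling the $n$ per-bit hypotheses into a cube-root inverter, contradicting the Discrete Cube Root Assumption. Your treatment is in fact more explicit about the parameter choices and the $\{0,1\}^n$ versus $\mathbb{Z}_N^*$ mismatch than the paper, which defers these details to~\cite{kearns:clt, servedio:q_c_learnability}.
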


\smallskip

In this section, we studied the two known examples of separations between classical and quantum learning algorithms.
First, we studied the discrete logarithm concept class of~\cite{liu:dlp}, which exhibits a $\mathsf{CC/QQ}$ separation (it is unknown whether it also exhibits a $\mathsf{CC/QC}$ separation).
Afterwards, we studied the cube root concept class which exhibits a $\mathsf{CC/QC}$ separation.
This second example shows that it is still possible to have a separation if the quantum learning algorithm is required to use a classically polynomially evaluatable hypothesis class.
In the next section, we dissect these examples to understand what the required ingredients were in proving these learning separations.
In particular, we study where these learning separations actually came from.

\section{Where can learning separations come from?}
\label{sec:recipe}

In this section, we attempt to characterize potential learning separations.
In particular, we study what precisely makes learning separations possible.
We begin by discussing what precisely is needed for a learning separation, and how to make sure the learning problem satisfies these requirements.
To establish a separation we must prove that $(i)$ no classical learning algorithm can learn it efficiently, and $(ii)$ there exists an efficient quantum learning algorithm that can. 
We will focus on $(i)$, since $(ii)$ is a different kind of problem that is about providing an instance of an efficient learner. 

\paragraph{Proving classical intractability of learning}
There are at least three possible ways to prove that a classical learning algorithm cannot efficiently learn a concept class, which we discuss below.

Firstly,  this can be achieved by making sure the concept class is not classically polynomially evaluatable even on a fraction of inputs (this is often called \textit{heuristic hardness}, which we discuss in more details below).
An example of this is the $\mathsf{CC/QQ}$ separation of the discrete logarithm concept class discussed in Section~\ref{subsec:dlp}.
However, a separation where the fact that no classical learner can efficiently learn it comes from just the computational hardness of the concepts may feel unsatisfactory.
In particular, it may feel that such a separation is more about the fact that quantum computers can efficiently evaluate some classically intractable functions (a broadly accepted fact), and not about learning proper (at least in an intuitive sense).

A second way to achieve that no classical learning algorithm can efficiently learn the concept class is to consider concepts that are somehow ``obfuscated''.
More precisely, there can be a specification of the concepts that allows for an efficient classical evaluation algorithm, but in the learning problem the concepts are specified in a different way that does not allow for an efficient classical evaluation algorithm.
An example of this is the $\mathsf{CC/QC}$ separation of the cubic root concept class discussed in Section~\ref{subsec:rsa}, where the concepts have an efficient classical evaluation algorithm if they are specified as ``$f_{d^*, i}(x) =\text{ $i$-th bit of }x^{d^*} \mod N$'', but they cannot be efficiently evaluated on a classical computer if they are specified only as ``the inverse of $f_N(x) = x^3 \mod N$''.

Finally, it could also be possible that there exist learning separations for concepts classes which are specified in a way that allow for an efficient classical evaluation algorithm.
In particular, it could be possible that the hardness really lies in pointing out which of the efficiently evaluatable concepts is generating the examples, which is arguably the most about ``learning proper''.
To the best of our knowledge, no such separation is known, though it is not inconceivable that they exist.
As discussed earlier, the class of polynomially-sized logarithmic-depth Boolean circuits is efficiently evaluatable yet not classically efficiently learnable~\cite{kearns:clt}.
However, these concepts are also not quantumly efficiently learnable~\cite{arunachalam:hardness_circuits}, and it is an open question whether it is possible to restrict this concept class such that it becomes quantumly efficiently learnable.

\paragraph{Challenges in deducing hardness of learning from computational complexity}
In the above paragraphs we highlighted the regimes in which one may expect provable classical intractability of learning, i.e., when concepts are classically intractable or specified in an ``obfuscated'' way. 
In either case, any attempt to actually prove classical intractability of learning may rely on a reduction to some other impossibility, as was the case for the discrete logarithm or the cube root concept class.
Specifically, in these cases efficient learning would imply the capacity to efficiently solve either the discrete logarithm or the cube root problem (which are presumed to be classically intractable).
However, in the attempts to do so it becomes apparent that what is needed for learning separations is often much weaker than the types of no-gos we accept and work on proving in other computational complexity contexts. 
To give a quick intuition, in computational complexity theory, one most often cares about the worst-case complexity, whereas in learning it is all about being correct on only a fraction of inputs, so at least this needs to be relaxed. 

More generally, in attempting to establish reductions of this type, one challenge is that we have to rely on results that state that it is hard to evaluate a certain function (related to the concept class) on just a fraction of inputs.
This is resolved by studying \textit{heuristic complexity}, which we discuss in more details below.
To ensure that the concept class remains quantumly efficiently learnable, we will use concepts that are quantumly polynomially evaluatable, which will result in $\mathsf{CC}/\mathsf{QQ}$ separations.
Additionally, we can also use concepts that are ``obfuscated'' in the sense that when given access to examples only a quantum learning algorithm can deduce a description of the concept that allows for an efficient classical evaluation algorithm, which results in $\mathsf{CC/QC}$ separations.
Secondly, another challenge is the presence of examples, which can radically enhance what can be efficiently evaluated (see Section~\ref{subsubsec:power_data}).
We resolve this by considering concepts that allow for efficient example generation, in which case having access to examples does not enhance what can be efficiently evaluated.

How would one use the observations above to simplify proofs of separations?
The general observations above suggest that it may be possible to further fine-tune the requirements and establish a framework that analyzes a learning problem (i.e., the concept class and the distribution) and tells you something about what kind of separations are possible.
Unfortunately, this work does not succeed in establishing such a general mechanism, but we are able to establish formulations that are more general then all of the two known examples of separations discussed in Section~\ref{subsec:dlp} and Section~\ref{subsec:rsa}.
The proof of the below theorem is deferred to Appendix~\ref{appendix:proof}.

\begin{restatable}[Sufficient conditions for separations based on heuristic hardness]{theorem}{recipe}
\label{thm:recipe_separation}

Consider a family of concept classes $\{\mathcal{C}_n\}_{n \in \mathbb{N}}$ and distributions $\{\mathcal{D}_n\}_{n \in \mathbb{N}}$ (i.e., a learning problem).
Suppose there exists another family of concept classes $\{\mathcal{F}_n\}_{n \in \mathbb{N}}$ and a family of invertible functions $\{g_n: \mathcal{X}_n \rightarrow \mathcal{X}_n\}_{n \in \mathbb{N}}$ such that every $c \in \mathcal{C}_n$ can be decomposed as 
\begin{align}
    c(x) = f(g_n^{-1}(x)), \quad \text{for some $f \in \mathcal{F}_n$}.
\end{align}
Then, if $\{\mathcal{F}_n\}_{n \in \mathbb{N}}$ and $\{g_n\}_{n \in \mathbb{N}}$ satisfy the criteria below, the family of concept classes $\{\mathcal{C}_n\}_{n \in \mathbb{N}}$ exhibits a $\mathsf{CC}/\mathsf{QQ}$ or $\mathsf{CC}/\mathsf{QC}$ separation (depending on which criteria are satisfied) under the family of input distributions $\{\mathcal{D}^g_n\}_{n \in \mathbb{N}}$, where $\mathcal{D}_n^g$ denotes the push-forward distribution $g_n(\mathcal{D}_n)$ (i.e., the distribution induced by first sampling $x \sim \mathcal{D}_n$, and then computing $g_n(x)$).

\bigskip

\noindent \underline{\textbf{Criterion 1} (efficient example generation)}: 
\begin{itemize}
    \item There exists a $\mathcal{O}(\mathrm{poly}(n))$-time randomized classical algorithm that draws a sample $(x, c(x))$, where $x \sim \mathcal{D}^g_n$.
\end{itemize}

\medskip

\noindent \underline{\textbf{Criterion 2} (classical intractability)}:
\begin{itemize}
    \item No classical algorithm can efficiently invert $g_n$ on a $1 - \frac{1}{\mathrm{poly}(n)}$ fraction of inputs. 
    More precisely, there does not exist a classical algorithm $\mathcal{A}$ with runtime $\mathcal{O}\left(\mathrm{poly}(n, 1/\epsilon)\right)$ that satisfies
        \begin{align}
            \mathbb{P}_{x \sim \mathcal{D}_n^g} \big[\mathcal{A}(x) \neq g_n^{-1}(x) \big] \leq \epsilon,
            \label{eq:qalg_recipe}
        \end{align}
    where $\mathcal{D}_n^g$ denotes the push-forward distribution $g_n(\mathcal{D}_n)$  (i.e., the distribution induced by first sampling $x \sim \mathcal{D}_n$, and then computing $g_n(x)$).
    \item For every $x \in \mathcal{X}_n$, there exists subsets $\mathcal{C}'_n(x) \subset \mathcal{C}_n$, $\mathcal{X}'_n(x) \subset \mathcal{X}_n$, and a $\mathcal{O}(\mathrm{poly}(n))$-time classical algorithm $\mathcal{B}$ that maps
    \[
    \big\{c'(y) \mid c' \in \mathcal{C}'_n, \text{ }y\in \mathcal{X}'_n\big\} \mapsto g^{-1}_n(x).
    \]
    \begin{itemize}
        \item In other words, if we can efficiently evaluate a subset of the concepts on a subset of points, then we can use this to efficiently invert $g_n$.
    \end{itemize}
\end{itemize}

\medskip

\noindent \underline{\textbf{Criterion 3} (quantum efficient learnability)}:
\begin{itemize}
    \item For a $\mathsf{CC}/\mathsf{QQ}$ separation we require:
    \begin{itemize}
        \item The problem $\big(\{\mathcal{F}_n\}_{n \in \mathbb{N}}, \{\mathcal{D}_n\}_{n \in \mathbb{N}} \big)$ is in $\mathsf{QQ}$.
        That is, $\{\mathcal{F}_n\}_{n \in \mathbb{N}}$ is quantumly efficiently learnable under $\{\mathcal{D}_n\}_{n \in \mathbb{N}}$ with a classically polynomially evaluatable hypothesis class.
        \item There exists a quantum algorithm $\mathcal{A}$ with runtime $\mathcal{O}(\mathrm{poly}(n, 1/\epsilon))$ that satisfies Eq.~\eqref{eq:qalg_recipe}
    \end{itemize}
    \item For a $\mathsf{CC}/\mathsf{QC}$ separation we require:
    \begin{itemize}
        \item The problem $\big(\{\mathcal{F}_n\}_{n \in \mathbb{N}}, \{\mathcal{D}_n\}_{n \in \mathbb{N}} \big)$ is in $\mathsf{QC}$.
        That is, $\{\mathcal{F}_n\}_{n \in \mathbb{N}}$ is quantumly efficiently learnable under $\{\mathcal{D}_n\}_{n \in \mathbb{N}}$ with a classically polynomially evaluatable hypothesis class.
        \item There exists a polynomial time classical evaluation algorithm $\mathcal{A}_{\mathrm{eval}}$, and a quantum algorithm $\mathcal{A}^q$ with runtime $\mathcal{O}(\mathrm{poly}(n, 1/\epsilon))$ that computes a specification $\mathcal{A}^q(g^{-1}_n) = [g^{-1}_n]$ such that the algorithm $\mathcal{A}(x) = \mathcal{A}_{\mathrm{eval}}([g^{-1}_n], x)$ satisfies Eq.~\eqref{eq:qalg_recipe}.
    \end{itemize}
\end{itemize}

\smallskip 

\end{restatable}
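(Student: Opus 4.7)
The plan is to establish the separation in the standard two-direction manner: show that the criteria together with an alleged efficient classical learner for $\{\mathcal{C}_n\}$ would contradict the hardness of inverting $g_n$, and separately construct an efficient quantum learner by composing the quantum inverter for $g_n$ with the quantum learner for $\{\mathcal{F}_n\}$.

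First I would prove that the problem lies in $\mathsf{QQ}$ (respectively $\mathsf{QC}$). Given oracle access to $EX(c, \mathcal{D}^g_n)$ with $c(x) = f(g_n^{-1}(x))$ for some unknown $f \in \mathcal{F}_n$, I apply the quantum algorithm from Criterion 3 to each sampled $x$ to obtain $g_n^{-1}(x)$ (or, in the $\mathsf{CC}/\mathsf{QC}$ setting, a classical specification $[g_n^{-1}]$ usable by a classical evaluator). This yields labeled pairs $(g_n^{-1}(x), c(x)) = (y, f(y))$ where $y$ is distributed according to $\mathcal{D}_n$, i.e.\ a simulated oracle $EX(f, \mathcal{D}_n)$. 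By Criterion 3, there is an efficient quantum learner that outputs a hypothesis $h_f$ $\epsilon$-close to $f$ under $\mathcal{D}_n$, with $h_f$ evaluatable quantumly (or classically in the $\mathsf{QC}$ case). The final hypothesis is $h(x) = h_f(g_n^{-1}(x))$: its specification packages $[g_n^{-1}]$ together with the specification of $h_f$, and a brief push-forward computation confirms it is $\epsilon$-close to $c$ under $\mathcal{D}^g_n$. Polynomial error control needs a short union-bound argument to account for the small failure probability of the inverter on each example.

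Next I would prove that the problem is not in $\mathsf{CC}$. Suppose for contradiction an efficient classical learner $\mathcal{L}$ exists. Using Criterion 1, we can simulate the oracle $EX(c, \mathcal{D}^g_n)$ classically in polynomial time and feed it to $\mathcal{L}$ with suitably small $\epsilon, \delta$, obtaining a classically polynomially evaluatable hypothesis $h$ satisfying $\Pr_{x \sim \mathcal{D}^g_n}[h(x) \neq c(x)] \leq \epsilon$. By running $\mathcal{L}$ on several independent concepts drawn from the required subset $\mathcal{C}'_n(x)$ (and on several sampled points from $\mathcal{X}'_n(x)$), we can with high probability evaluate $\{c'(y) \mid c' \in \mathcal{C}'_n(x), y \in \mathcal{X}'_n(x)\}$ correctly; feeding these evaluations into the classical algorithm $\mathcal{B}$ guaranteed by the second bullet of Criterion 2 produces $g_n^{-1}(x)$ in polynomial time on a $1 - 1/\mathrm{poly}(n)$ fraction of $x \sim \mathcal{D}^g_n$. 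This contradicts the first bullet of Criterion 2. Combining both directions yields the claimed $\mathsf{CC}/\mathsf{QQ}$ (or $\mathsf{CC}/\mathsf{QC}$) separation.

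The main obstacle I anticipate is the classical-hardness reduction, namely turning a PAC-style hypothesis $h$ into a reliable evaluator on the specific subset $\mathcal{X}'_n(x)$ demanded by $\mathcal{B}$. The learner only guarantees accuracy averaged over $\mathcal{D}^g_n$, whereas $\mathcal{B}$ may require correctness on a structured or even worst-case subset. One must therefore either assume (as part of how Criterion 2 is applied in the target instances) that the relevant subsets occupy non-negligible mass under $\mathcal{D}^g_n$, or boost accuracy by learning several times with shrinking $\epsilon$ and majority-vote across independent runs. Getting the quantification right here, and carefully tracking the $\epsilon, \delta$ budgets so that the reduction still runs in polynomial time, is where the bulk of the technical care goes; the rest of the argument is essentially a bookkeeping exercise built on the reductions already illustrated by the discrete logarithm and cube root examples.
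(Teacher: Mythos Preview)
Your proposal is correct and follows essentially the same two-part structure as the paper's proof: simulate $EX(f,\mathcal{D}_n)$ from $EX(c,\mathcal{D}^g_n)$ via the quantum inverter and compose the resulting hypothesis with $g_n^{-1}$ for the positive direction, and for the negative direction use Criterion~1 to classically simulate the oracle, run the assumed classical learner on the concepts in $\mathcal{C}'_n(x)$, and feed the resulting evaluations into $\mathcal{B}$ to contradict Criterion~2. The obstacle you flag (that the PAC guarantee is only on average over $\mathcal{D}^g_n$ while $\mathcal{B}$ needs correctness on the specific set $\mathcal{X}'_n(x)$) is real and is handled in the paper only informally, by ``making $\epsilon'$ small enough''; your suggestion to shrink $\epsilon$ and union-bound over the polynomially many queries is exactly the intended fix.
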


The main idea behind the above theorem is that we split our concepts in two parts, i.e., a learning part $\{\mathcal{F}_n\}_{n \in \mathbb{N}}$ and  a classically intractable part $\{g^{-1}_n\}_{n \in \mathbb{N}}$. 
This servers two purposes: on the one hand we ensure quantum learnability by using $\{\mathcal{F}_n\}_{n \in \mathbb{N}}$, and on the other hand we ensure classical intractability of learning by using $\{g^{-1}_n\}_{n \in \mathbb{N}}$.
Specifically, for the classical intractability of learning, we decompose our concepts in such a way that by Criterion 2 we can conclude that the capacity to learn the family of concept classes enables one to efficiently evaluate $\{g^{-1}_n\}_{n \in \mathbb{N}}$.

Let us now describe how parts of the above theorem's conditions correspond to the aforementioned problems in establishing separations.
Firstly, to overcome the challenge that training examples can radically enhance what a classical learner can evaluate, we require that examples are efficiently generatable (Criterion~1).
In particular, if the examples are efficiently generatable, then having access to examples does not enhance what a learner can evaluate relative to a non-learning algorithm, since any non-learning algorithm can first generate examples to put itself on equal footing with a learner.
In other words, if the examples are efficiently generatable, then the existence of an efficient classical learning algorithm implies the existence of an efficient classical algorithm for evaluating the concepts at hand.
In our theorem, we describe what it means for examples to be efficiently generatable in the most general way.
Specifically, we require that the concepts must be \textit{random verifiable} functions~\cite{arrighi:blind} (the definition of which is equivalent to Criterion 1).
However, there are properties that imply random verifiability that are perhaps more intuitive (although more general situations are possible).
For instance, in the existing separations in literature, random verifiability is ensured by letting $g_n$ be a bijection, and by having both $\{\mathcal{F}_n\}_{n \in \mathbb{N}}$ and $\{g_n\}_{n \in \mathbb{N}}$ be efficiently evaluatable. 
Namely, in this case the pair $(g_n(x), f(x))$ for a uniformly random $y$ can be understood as a pair $(x, f(g_n^{-1}(x)))$, where $x$ is chosen uniformly at random (since $g_n$ is a bijection). 
Note that this is a stronger requirement than being random verifiable, since there exist random verifiable functions that may not be efficiently computable classically (e.g., if the graph isomorphism problem is not solvable in polynomial time on a classical computer~\cite{arrighi:blind}).

Having dealt with the challenge that training examples can radically change what a learner can evaluate, we still have to actually separate a classical learner from a quantum learner.
We achieve this by considering concepts that are build up from functions $g_n$ that can be inverted by a quantum algorithm, yet no classical algorithm can do so even on a fraction of inputs. 
As mentioned before, in complexity theory one typically studies the worst-case hardness of a problem, whereas in the context of learning one is only concerned with being correct on a fraction of inputs.
We therefore consider the less widely-known areas of complexity theory that study \emph{average-case} or \emph{heuristic} complexity~\cite{bogdanov:average} (these complexity classes have however been important in connection to cryptography, and, in more recent times in studies involving quantum computational supremacy~\cite{harrow:supremacy}).
In heuristic complexity one considers algorithms that err on a inverse-polynomial fraction of inputs, which matches the PAC-learning requirements on the hypothesis that a learner has to output.
Throughout this note we talk about functions of which some are not binary-valued, in which case they are technically not decision problems.
We admit a slight abuse of notation and actually talk about a more general class of functions with the property that if those are evaluatable, then they allow us to solve a corresponding decision problem (e.g., deciding if the first bit of the function is zero).

\begin{definition}[Heuristic complexity~\cite{bogdanov:average}]
\label{def:heurbpp}
A distributional problem $(L, \{\mathcal{D}_n\})${\color{blue}\footnotemark[7]} is in $\mathsf{HeurBPP}$ if there exists a $\mathrm{poly}(n, 1/\epsilon)$-time randomized classical algorithm $\mathcal{A}$ such that for all $n$ and $\epsilon > 0$:
\begin{align}
\label{eq:heur}
     \mathsf{Pr}_{x \sim \mathcal{D}_n}\Big[\mathsf{Pr}\big(\mathcal{A}(x) = L(x)\big) \geq \frac{2}{3} \Big] \geq 1 - \epsilon,
\end{align}
where the inner probability is taken over the internal randomization of $\mathcal{A}$.\\
\indent Also, a distributional problem $(L, \{\mathcal{D}_n\})$ is in $\mathsf{HeurBQP}$ if there exists a $\mathrm{poly}(n, 1/\epsilon)$-time quantum algorithm $\mathcal{A}$ that satisfies Eq.~\eqref{eq:heur}.
\end{definition}
There are multiple ways to define the notion of average-case complexity. 
In basic textbooks it is often defined in terms of the average-case runtimes of the algorithms (i.e., when averaging according to some distribution over the inputs).
However, this definition can be surprisingly problematic. 
A more straightforward way to define average-case complexity, and the one we utilize, considers algorithm that can never err, but that are allowed to output ``don't know'' on a fraction of inputs. 
We refer the reader to Section 2.2 of~\cite{bogdanov:average} for other various possible definitions.
\begin{definition}[Average-case complexity~\cite{bogdanov:average}]
\label{def:avgbpp}
A distributional problem $(L, \{\mathcal{D}_n\})${\color{blue}\footnotemark[7]} is in $\mathsf{AvgBPP}$ if there exists a $\mathrm{poly}(n, 1/\epsilon)$-time randomized classical algorithm $\mathcal{A}$ such that for all $n$ and $\epsilon > 0$:
\begin{align}
\label{eq:avg1}
    \mathsf{Pr}\big(\mathcal{A}(x) \in \{L(x), \bot{\color{blue}\footnotemark[8]}\}\big) \geq \frac{2}{3}.
\end{align}
where the probability is taken over the internal randomization of $\mathcal{A}$, and 
\begin{align}
\label{eq:avg2}
    \mathsf{Pr}_{x \sim \mathcal{D}_n}\Big[\mathsf{Pr}\big(\mathcal{A}(x) = L(x)\big) \geq \frac{2}{3} \Big] \geq 1 - \epsilon,
\end{align}
where the inner probability is taken over the internal randomization of $\mathcal{A}$.\\
\indent Also, a distributional problem $(L, \{\mathcal{D}_n\})$ is in $\mathsf{AvgBQP}$ if there exists a $\mathrm{poly}(n, 1/\epsilon)$-time quantum algorithm $\mathcal{A}$ that satisfies Eq.~\eqref{eq:avg1} and Eq.~\eqref{eq:avg2}.
\end{definition}
\footnotetext[7]{Here $L:\{0, 1\}^* \rightarrow \{0,1\}$ is a language and $\mathcal{D}_n$ is a distribution over inputs $\{0,1\}^n$.}
\footnotetext[8]{The symbol $\bot$ corresponds to the algorithm outputting ``don't know''.}

For our purpose, we require that there does not exist an efficient randomized classical algorithm that can invert $g_n$ on an $\epsilon$-fraction of the inputs (i.e., it can err).
At first glance, this corresponds to heuristic complexity.
However, in certain settings this also corresponds to average-case complexity.
Specifically, recall that a setting that allowed for efficient example generation is one where $g_n$ is efficiently evaluatable classically. 
In this setting, the hardness of inverting $g_n$ on an $\epsilon$-fraction of the inputs also falls into average-case complexity, since we can turn any algorithm that errs into an algorithm that outputs ``don't know'' by computing $g_n$ and checking if the output is correct.

While heuristic-hardness statements are not as common in quantum computing literature, the cryptographic security assumptions of RSA and Diffie-Hellman are in fact examples of such heuristic-hardness statements{\color{blue}\footnotemark[9]}.
Such heuristic-hardness statements are typically obtained following \emph{worst-case to average-case reductions}, from which it follows that being correct on a fraction of inputs is at least as hard as being correct on all inputs.
For example, we know that computing the discrete logarithm on a $\frac{1}{2} + \frac{1}{\mathrm{poly}(n)}$ fraction of inputs is as hard as computing it for all inputs due to a worst-case to average-case reduction by Blum and Micali~\cite{blum:hardness} (i.e., if there exists an efficient algorithm for a $\frac{1}{2} + \frac{1}{\mathrm{poly}(n)}$ fraction of inputs, then there exists an efficient algorithm for all inputs).

\footnotetext[9]{In fact, most often these assumptions also state that there are also no efficient heuristic algorithms even when given training examples.}

We remark that it is possible to restate learning separations using just terminology of complexity classes.
Specifically, a learning separation combines two types of complexity classes.
As mentioned earlier, the fact that a learner can err on a fraction of inputs is captured by heuristic complexity classes.
Additionally, the fact that data can radically enhance what a leaner can evaluate is captured by the ``sampling advice'' complexity classes defined in~\cite{huang:power}.
Thus, one can reformulate learning separations in complexity-theoretic terms as follows:\ learning separations are exhibited by problems inside $\mathsf{HeurBQP}$ with sampling advice, that do not lie in $\mathsf{HeurBPP}$ with sampling advice.
Sampling advice is closely-related to the notion of ``advice strings'' studied in complexity classes such as $\mathsf{P/poly}$~\cite{huang:power} (i.e., the class of problems that are solvable by a polynomial time classical algorithm using a polynomial sized ``advice string'' which only depends on the size of the input).

Recall that in Theorem~\ref{thm:recipe_separation} we decompose the concepts into two parts:\ a learning part $\{\mathcal{F}_n\}_{n \in \mathbb{N}}$ (ensuring efficient quantum learnability) and a classically intractable part $\{g_n^{-1}\}_{n \in \mathbb{N}}$ (ensuring classical intractability of learning).
Note that separations could in principle be possible without this decomposition.
For instance, there might exist a classically polynomially evaluatable concept classes that exhibits a learning separation (in which case there will not be a classically intractable part).
Moreover, it is also possible to generalize Theorem~\ref{thm:recipe_separation}.
In particular, it would already suffice to require that $(i)$ examples are efficiently generatable, $(ii)$ $\{g_n\}_{n \in \mathbb{N}}$ is heuristically hard to invert, and $(iii)$ $\{\mathcal{C}_n\}_{n \in \mathbb{N}}$ is $\mathsf{QQ}$ or $\mathsf{QC}$ learnable.
More precisely, in contrast to Theorem~\ref{thm:recipe_separation}, the efficient quantum learnability does not necessarily have to come from the efficient learnability of $\{\mathcal{F}_n\}_{n \in \mathbb{N}}$, or the efficient quantum algorithm for $\{g_n^{-1}\}_{n \in \mathbb{N}}$.
However, in Theorem~\ref{thm:recipe_separation} we choose to list in more detail specific criteria that guarantee efficient quantum learnability (i.e., we show what efficient quantum learnability can look like).
Another possible generalization would be to consider concept classes that can be decomposed into smaller concept classes that each have their own decomposition as in Theorem~\ref{thm:recipe_separation}.
These concept classes will still be classically hard to learn, though the efficient quantum learnability has to be guaranteed by something else.
We also remark that Theorem~\ref{thm:recipe_separation} does not capture the setting where the hypothesis class used by the learning algorithm is fixed (e.g., the order-parameter example discussed in Section~\ref{sec:definitions}).
Finally, it is also important to consider what happens when we drop the criteria that examples are efficiently generatable, since this allows us to also be applicable to the important setting where the data is generated by a quantum process.

\paragraph{Learning separations without efficient data generation}
Throughout the above, we focused on the setting where example generation is efficient, as this makes the proof of separation more straightforward by ensuring that access to examples does not help the classical learner.
With regard to the hardness of learning, the hardness of generating examples is a double-edged sword.
On one hand, if generating examples is hard, then this intuitively makes the function even harder for a classical learner.
On the other hand, for the same reason having such hard to generate examples might give more leverage to a learner the harder the evaluation is (see Section~\ref{subsubsec:power_data}). 

From the perspective of quantum machine learning requiring that examples are efficiently generatable classically is a serious problem.
In particular, it is often argued that quantum learning advantages should arise in cases where the examples are generated by a quantum process (i.e., when the true classifier is a classically intractable quantum function).
However, in these cases efficient example generation is generally not possible using a classical computer.
Our results therefore arguably say almost nothing about many relevant scenarios in quantum machine learning, and furthermore leave open the more fundamental question whether efficient example generation is in fact necessary for separations.
We will now address this question in the setting where the data is generated by a quantum process, and analyze the possible scenarios.

Recall that to achieve a learning separation we need both $(i)$~intractability of classical learning, and $(ii)$~efficient quantum learnability.
In Theorem~\ref{thm:recipe_separation}, we considered sufficiently hard functions in order to satisfy~$(i)$, but not so hard as to violate~$(ii)$, while at the same time dealing with the problem of examples offering leverage to classical learners.
It is however possible to satisfy both~$(i)$ and deal with the problem of examples giving leverage to classical learners, by considering even harder functions, and then deal with the issue of efficient quantum learnability separately.
Specifically, we can consider functions that are outside $\mathsf{HeurP}/\mathsf{poly}$ (note that $\mathsf{BPP} \subseteq \mathsf{P/poly}$, so we do not need to worry about randomness as a resource).
Recall that $\mathsf{HeurP}/\mathsf{poly}$ is the class of functions that are heuristically evaluatable (i.e., on an inverse-polynomial fraction of inputs) in polynomial time by a classical algorithm using a polynomial sized ``advice string'' which is input independent, but size dependent.
Note that functions which are outside $\mathsf{HeurP}/\mathsf{poly}$ are not efficiently classically learnable, since if they were then the training examples can be turned into an ``advice string'' that enables one to heuristically evaluate the function (see also Chapter 2 of the Supplementary Information of~\cite{huang:power} for a more formal discussion).

Functions outside $\mathsf{HeurP}/\mathsf{poly}$ (ensuring classical intractability of learning) but inside $\mathsf{BQP}${\color{blue}\footnotemark[10]} (ensuring efficient quantum learnability) are thus prime candidates for learning separations.
Moreover, such functions do not have to be random verifiable, in which case they apply to the setting where the data comes from a quantum experiment.
With this point of view, we now analyze all the possible scenarios for which our previous theorem or other observations elucidate whether learning separations can exist.
Specifically, we need to consider the possible relationships between being in the class $\mathsf{BQP}\backslash \left(\mathsf{HeurP/poly}\right)${\color{blue}\footnotemark[11]}, or the class $\mathsf{HeurP/poly} \cap \mathsf{BQP}$, and being random verifiable.
First, suppose all functions in $\mathsf{BQP}\backslash \left(\mathsf{HeurP/poly}\right)$ are in fact random verifiable.
In this case, Theorem~\ref{thm:recipe_separation} is applicable to these scenarios as well.
However, it is also possible (and arguably, more likely{\color{blue}\footnotemark[12]}) 
that there exist functions $f$ in $\mathsf{BQP}$ that are neither in $\mathsf{HeurP/poly}$ nor random verifiable.
In this case, we can prove the existence of learning separations.
More precisely, we can consider the concept class $\{f\}$, which is trivially learnable by a quantum learner (since $f \in \mathsf{BQP}$), and it is not classically learnable (since $f \not \in \mathsf{HeurP}/\mathsf{poly}$).
We state this observation in the proposition below.
\footnotetext[10]{We choose $\mathsf{BQP}$ here instead of $\mathsf{HeurBQP}$ since it feels more natural when the data comes from a quantum process (i.e., a quantum process is always correct on the function that it implements) and it allows us later on to more easily establish quantum learnability in Theorem~\ref{thm:recipe_separation2}.}
\footnotetext[11]{Here $A\backslash B$ denotes the set $\{c \mid c \in A,\text{ }c \not\in B\}$ (i.e., the set obtained by removing the elements of $B$ from $A$).}
\footnotetext[12]{The converse of this would imply that all $\mathsf{BQP}$-complete functions -- which are also unlikely to be in $\mathsf{HeurP/poly}$, as we believe that factoring or the discrete logarithm are not in $\mathsf{HeurP/poly}$ (since the converse would imply certain breaks in Diffie-Helman or RSA cryptosystems) -- are random verifiable, which would be quite revolutionary in for instance Monte Carlo simulations of quantum systems.} 

\begin{proposition}
\label{prop:norv}
If there exists an $f \in \mathsf{BQP}$  with $f \not \in \mathsf{HeurP}/\mathsf{poly}$ that is \textit{not} random verifiable, then there exists a learning problem which exhibits a $\mathsf{CC/QQ}$ separation whose concepts are not random verifiable. 
\end{proposition}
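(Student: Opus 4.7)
The plan is to take the singleton concept class $\mathcal{C}_n = \{f_n\}$, where $f_n$ is the restriction of $f$ to $n$-bit inputs, together with the distribution $\{\mathcal{D}_n\}_{n \in \mathbb{N}}$ that witnesses $f \not\in \mathsf{HeurP}/\mathsf{poly}$ (by definition of a distributional complexity class, hardness is with respect to some such $\{\mathcal{D}_n\}$). Non-random-verifiability of the concept class is then immediate: since $\mathcal{C}_n$ contains only $f_n$, any procedure that produces labelled examples of a random concept in the class is just a procedure for generating labelled examples of $f$, which by hypothesis does not exist efficiently.

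For the $\mathsf{QQ}$ direction I would argue that this learning problem is trivially quantumly efficiently learnable with a quantumly polynomially evaluatable hypothesis class. Indeed, take $\mathcal{H}_n = \{f_n\}$ with the single-bit specification pointing at $f_n$; since $f \in \mathsf{BQP}$, the hypothesis class is quantumly polynomially evaluatable, and the quantum learning algorithm simply ignores its examples and outputs the unique specification. This meets the requirements of Definition~\ref{def:learnability}.

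The main step is showing the problem is not in $\mathsf{CC}$, and this is where I expect the only real subtlety. Suppose for contradiction that there is a classical efficient learner $\mathcal{A}_n$ using some classically polynomially evaluatable hypothesis class $\{\mathcal{H}_n\}_{n \in \mathbb{N}}$. Fix any inverse-polynomial $\epsilon$ and $\delta$. With probability at least $1-\delta$ over the examples drawn from $\mathcal{D}_n$ and the internal randomness of $\mathcal{A}_n$, the learner outputs a specification $\sigma$ of some $h_\sigma \in \mathcal{H}_n$ with $\mathbb{P}_{x \sim \mathcal{D}_n}[h_\sigma(x) \neq f_n(x)] \leq \epsilon$. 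By a standard averaging argument (Adleman-style), for every $n$ there exists a specific polynomial-length specification $\sigma_n^\star$ such that $\mathbb{P}_{x \sim \mathcal{D}_n}[h_{\sigma_n^\star}(x) \neq f_n(x)]$ is at most a polynomially small quantity; taking $\{\sigma_n^\star\}_{n \in \mathbb{N}}$ as the non-uniform advice and running the classical polynomial-time evaluation algorithm for $\mathcal{H}_n$ on input $(\sigma_n^\star, x)$ gives a $\mathsf{HeurP}/\mathsf{poly}$ algorithm for $f$, contradicting $f \not\in \mathsf{HeurP}/\mathsf{poly}$.

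The one delicate point to get right is precisely how the averaging argument converts a randomized learner that succeeds with high probability over a random training set into a fixed polynomial-sized advice string; this is essentially the content of Chapter~2 of the Supplementary Information of~\cite{huang:power} cited earlier in the text, and I would simply invoke that reference rather than reprove it. Everything else is bookkeeping with the definitions.
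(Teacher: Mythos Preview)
Your proposal is correct and follows exactly the same approach as the paper: take the singleton concept class $\{f\}$, observe that it is trivially in $\mathsf{QQ}$ because $f\in\mathsf{BQP}$, and argue that it is not in $\mathsf{CC}$ because an efficient classical learner would yield a $\mathsf{HeurP}/\mathsf{poly}$ algorithm for $f$ via the training-examples-as-advice argument of~\cite{huang:power}. The paper states this inline in a single sentence without spelling out the Adleman-style averaging step; your write-up is simply a more detailed version of the same argument.
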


The above proposition discusses a setting where the data comes from a quantum process for two reasons.
Firstly, we have no reason to believe that the concepts in this setting are random verifiable in general.
Secondly, there are strong arguments that the concepts in this setting are probably also not in $\mathsf{HeurP}/\mathsf{poly}$ (as the converse would imply breaks in Diffie-Helman and RSA cryptosystems{\color{blue}\footnotemark[12]}).
This analysis provides a bit of information about the possibility of learning separations without using random verifiability in a few new regimes. 
However, there is one possible regime that we do not address. 
Specifically, it may be the case that there exist functions in $\big(\mathsf{HeurP/poly}\big) \cap \mathsf{BQP}$, which are not random verifiable.
Note that being outside of $\mathsf{HeurP/poly}$ guarantees that a problem is not classically efficiently learnable, but on the other hand being inside $\mathsf{HeurP/poly}$ does \emph{not} guarantee that it is classically efficiently learnable. 
Consequently, we can provide no insights for this case.
However, we  highlight that the regime of Proposition~\ref{prop:norv} arguably covers the most interesting region of learning separations for the hardest functions in $\mathsf{BQP}$, which will coincide with those from elaborate quantum experiments.

Following the reasoning above, we can reformulate Theorem~\ref{thm:recipe_separation} to also capture the learning separation of Proposition~\ref{prop:norv}.
Specifically, we can provide a ``checklist'' of criteria (albeit less intuitive than those in Theorem~\ref{thm:recipe_separation}) that when satisfied lead to a learning separation, without the examples necessarily being efficiently generatable.
Firstly, to ensure efficient quantum learnability we require that the underlying function lies in $\mathsf{BQP}$ and that the concept class put on top of this function is quantumly efficiently learnable.
Secondly, to ensure classical intractability of learning, we require that the underlying function lies outside $\mathsf{HeurP/poly}$ (i.e., the function is so hard, that even examples will not help to efficiently evaluate it classically).
We provide this ``checklist'' in the theorem below and we provide more details in Appendix~\ref{appendix:recipe2}.
We would like to highlight that the main point of this new ``checklist'' is that as long as there exist functions that lie in $\mathsf{BQP}\backslash \big(\mathsf{HeurP/poly}\big)$, then we can have \textit{both} classical intractability of learning \textit{and} efficient quantum learnability without limiting the hypothesis class to a trivial singleton class.

\begin{restatable}[Sufficient conditions for separations based on heuristic hardness -- version 2]{theorem}{recipetwo}
\label{thm:recipe_separation2}
Consider a family of concept classes $\{\mathcal{C}_n\}_{n \in \mathbb{N}}$ and distributions $\{\mathcal{D}_n\}_{n \in \mathbb{N}}$ (i.e., a learning problem).
Suppose there exists another family of concept classes $\{\mathcal{F}_n\}_{n \in \mathbb{N}}$ and a family of invertible functions $\{g_n: \mathcal{X}_n \rightarrow \mathcal{X}_n\}_{n \in \mathbb{N}}$ such that every $c \in \mathcal{C}_n$ can be decomposed as 
\begin{align}
    c(x) = f(g_n^{-1}(x)), \quad \text{for some $f \in \mathcal{F}_n$}.
\end{align}
Then, if $\{\mathcal{F}_n\}_{n \in \mathbb{N}}$ and $\{g_n\}_{n \in \mathbb{N}}$ satisfy the criteria below, the family of concept classes $\{\mathcal{C}_n\}_{n \in \mathbb{N}}$ exhibits a $\mathsf{CC}/\mathsf{QQ}$ separation under the family of input distributions $\{\mathcal{D}^g_n\}_{n \in \mathbb{N}}$, where $\mathcal{D}_n^g$ denotes the push-forward distribution $g_n(\mathcal{D}_n)$ (i.e., the distribution induced by first sampling $x \sim \mathcal{D}_n$, and then computing $g_n(x)$).

\bigskip

\noindent \underline{\textbf{Criterion 1} (hardness of $\{g_n^{-1}\}_{n \in \mathbb{N}}$)}{\color{blue}\footnotemark[13]}: 
\begin{itemize}
    \item The family of functions $\{g_n^{-1}\}_{n \in \mathbb{N}}$ lies inside $\mathsf{BQP}$ but lies outside $\mathsf{HeurP/poly}$ (with respect to input distribution $g_n(\mathcal{D}_n)$).
\end{itemize}

\noindent \underline{\textbf{Criterion 2} (efficient learnability of $\{\mathcal{F}_n\}_{n \in \mathbb{N}}$)}: 
\begin{itemize}
    \item The problem $\big(\{\mathcal{F}_n\}_{n \in \mathbb{N}}, \{\mathcal{D}_n\}_{n \in \mathbb{N}} \big)$ is in $\mathsf{QQ}$.
        That is, $\{\mathcal{F}_n\}_{n \in \mathbb{N}}$ is quantumly efficiently learnable under $\{\mathcal{D}_n\}_{n \in \mathbb{N}}$ with a classically polynomially evaluatable hypothesis class.
\end{itemize}   

\noindent \underline{\textbf{Criterion 3} (evaluating concepts $\implies$ computing $g_n^{-1}$)}:     
\begin{itemize}
    \item For every $x \in \mathcal{X}_n$, there exists subsets $\mathcal{C}'_n(x) \subset \mathcal{C}_n$, $\mathcal{X}'_n(x) \subset \mathcal{X}_n$, and a $\mathcal{O}(\mathrm{poly}(n))$-time classical algorithm $\mathcal{B}$ that maps
    \begin{align}
    \label{eq:set_g}
        \big\{c'(y) \mid c' \in \mathcal{C}'_n, \text{ }y\in \mathcal{X}'_n\big\} \mapsto g^{-1}_n(x).
    \end{align}
    \begin{itemize}
        \item In other words, if we can efficiently evaluate a subset of the concepts on a subset of points, then we can use this to efficiently invert $g_n$.
    \end{itemize}
\end{itemize}

\end{restatable}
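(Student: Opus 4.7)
}

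The plan is to establish the two halves of the separation independently: first showing $L \in \mathsf{QQ}$ using Criteria~1 and 2, then showing $L \not\in \mathsf{CC}$ by combining Criterion~3 with the $\mathsf{HeurP/poly}$-hardness assumption from Criterion~1.

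For $L \in \mathsf{QQ}$, I would exhibit an explicit quantum learner together with a quantumly polynomially evaluatable hypothesis class. Given the oracle $EX(c, \mathcal{D}_n^g)$, each drawn pair has the form $(x, f(g_n^{-1}(x)))$ with $x \sim g_n(\mathcal{D}_n)$. Criterion~1 gives a $\mathsf{BQP}$-algorithm for $g_n^{-1}$, so I would use it to transform each example into $(g_n^{-1}(x), f(g_n^{-1}(x)))$, a pair distributed according to $\mathcal{D}_n$ with label $f \in \mathcal{F}_n$. Feeding polynomially many such transformed examples into the quantum learner guaranteed by Criterion~2 produces a hypothesis $h \in \mathcal{H}_n^{\mathcal{F}}$ (classically polynomially evaluatable) with $\Pr_{y \sim \mathcal{D}_n}[h(y) \neq f(y)] \leq \epsilon'$. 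The output hypothesis is the composite $h \circ g_n^{-1}$, specified by the pair $(\text{spec}(h), \text{spec of } g_n^{-1} \text{ circuit})$ and evaluated quantumly by first running the $\mathsf{BQP}$-algorithm for $g_n^{-1}$ and then the classical evaluator for $h$. The error bound follows because $g_n$ pushes $\mathcal{D}_n$ to $\mathcal{D}_n^g$, so $\Pr_{x \sim \mathcal{D}_n^g}[(h \circ g_n^{-1})(x) \neq c(x)] = \Pr_{y \sim \mathcal{D}_n}[h(y) \neq f(y)] \leq \epsilon'$ (with an additive $\mathrm{poly}$ loss from the bounded error of the $\mathsf{BQP}$-algorithm, absorbed by standard amplification and adjusting $\epsilon', \delta'$).

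For $L \not\in \mathsf{CC}$, I argue by contradiction. Suppose a classical learner $\mathcal{A}_n$ efficiently learns $\{\mathcal{C}_n\}$ under $\{\mathcal{D}_n^g\}$ with classically polynomially evaluatable hypothesis class $\mathcal{H}_n$. For any $c \in \mathcal{C}_n$ and desired error $\epsilon$, the learner's output $h$ satisfies $\Pr_{x \sim \mathcal{D}_n^g}[h(x) \neq c(x)] \leq \epsilon$ with high probability. Fixing a good set of examples and random coins (they exist by averaging) produces a specific specification $\sigma_c$ of polynomial length; taking $\sigma_c$ as the non-uniform advice and running the polynomial-time classical evaluator of $\mathcal{H}_n$ places $c$ in $\mathsf{HeurP/poly}$. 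I would then chain this with Criterion~3: concatenating the advice strings $\{\sigma_{c'} : c' \in \mathcal{C}_n'(x)\}$ (polynomially many, hence polynomial in aggregate) yields, after running the classical algorithm $\mathcal{B}$ on the heuristically evaluated tuple $\{c'(y)\}_{c' \in \mathcal{C}_n'(x), y \in \mathcal{X}_n'(x)}$, a polynomial-time computation of $g_n^{-1}(x)$. A union bound over the polynomially many evaluations gives heuristic correctness of $g_n^{-1}$ with respect to $\mathcal{D}_n^g$, placing $g_n^{-1}$ in $\mathsf{HeurP/poly}$ and contradicting Criterion~1.

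The main obstacle is the union-bound step in the hardness direction: the PAC guarantee on each $c'$ is over $y \sim \mathcal{D}_n^g$, while Criterion~3 queries $c'$ on a set $\mathcal{X}_n'(x)$ that depends on the target input $x$. To close this gap cleanly I would interpret Criterion~3 as requiring that the marginal induced on the queried pairs be dominated by $\mathcal{D}_n^g$ (or an efficiently samplable distribution from which a heuristic learner inherits guarantees), as is the case in the two worked examples; with this reading, set $\epsilon$ in the PAC guarantee small enough that a union bound over the polynomially many queries still gives success probability $\geq 1 - 1/\mathrm{poly}(n)$, which is precisely the $\mathsf{HeurP/poly}$ threshold needed. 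All remaining steps are routine: choosing $\epsilon'$ and $\delta'$ in the quantum learner of part~(a) to absorb the bounded-error overhead of the $\mathsf{BQP}$-subroutine, and verifying that specifications and evaluators compose within the polynomial-time budget. Full details are deferred to Appendix~\ref{appendix:recipe2}.
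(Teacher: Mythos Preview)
Your proposal is correct and follows essentially the same two-part strategy as the paper's proof-sketch: use the $\mathsf{BQP}$ algorithm for $g_n^{-1}$ to convert $EX(c,\mathcal{D}_n^g)$ into $EX(f,\mathcal{D}_n)$ and then invoke the $\mathsf{QQ}$-learner for $\{\mathcal{F}_n\}$ (composing the resulting hypothesis with $g_n^{-1}$), and for non-learnability turn the learner's training set into polynomial advice, evaluate the learned hypotheses on the points required by Criterion~3, and feed the results to $\mathcal{B}$ to place $g_n^{-1}$ in $\mathsf{HeurP/poly}$. You are in fact more explicit than the paper about the distributional mismatch in the union-bound step (the PAC guarantee is over $\mathcal{D}_n^g$ while $\mathcal{X}'_n(x)$ depends on $x$); the paper's argument is presented only as a sketch and leaves this point implicit, so your proposed reading of Criterion~3 is a reasonable way to make precise what both proofs need.
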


\footnotetext[13]{Note that for Theorem~\ref{thm:recipe_separation2} to go beyond Theorem~\ref{thm:recipe_separation} we require also that evaluating $g_n$ is classically intractable (since otherwise we have random verifiability of the concepts).}

Recall that the point of Theorems~\ref{thm:recipe_separation} and~\ref{thm:recipe_separation2} is to provide a checklist of relatively easy to check criteria for a provable learning separation to streamline one's approach to proving learning separations.
Moreover, it should serve as a tool to elucidate what the bottlenecks are in proving learning separations.
To illustrate its usefulness, we will discuss four examples of how to use Theorem~\ref{thm:recipe_separation} when proving separations.
Specifically, we first show that it captures the two learning separations discussed in Section~\ref{subsec:dlp} and Section~\ref{subsec:rsa}.
Afterwards, we apply our theorem to discuss $(i)$ the importance of the ``efficient data generation'' requirement, $(ii)$ the challenges of building a learning separation from a known computational separation, and $(iii)$ the bottlenecks in establishing separations when data comes from (potentially hard to classically simulate) quantum experiment.

\subsection{Applying our checklist to (potential) separations}
\label{subsec:recipe_examples}

In this section, we discuss four didactic examples of how to use Theorem~\ref{thm:recipe_separation} and~\ref{thm:recipe_separation2} to streamline one's approach to proving learning separations, or elucidate
the bottlenecks in proving this separations.
In particular, we look at learning problems studied in literature.
The examples are as follows:
\begin{itemize}[leftmargin=*]
    \item \underline{Example 1}:\ we discuss how our theorem captures the $\mathsf{CC/QQ}$ separations of the discrete logarithm concept class (Section~\ref{subsec:dlp}) and the $\mathsf{CC/QC}$ separation of cube root concept class (Section~\ref{subsec:rsa}).
    \item \underline{Example 2}:\ we study the problem of quantum circuit evaluation, and we highlight the importance of the ``efficient data generation'' requirement. In particular, we show how having access to examples can radically enhance what a classical learner is able to efficiently evaluate.
    \item \underline{Example 3}:\ we study a learning problem based on matrix inversion and we use our checklist to elucidate the bottlenecks in building a learning separation on a computational separation. 
    \item \underline{Example 4}:\ we study Hamiltonian learning and we use our theorem to study separations when the data comes from a (potentially hard to classically simulate) quantum experiment.
\end{itemize}

\subsubsection{Example 1: the discrete logarithm and cube root concept classes}
\label{subsubsec:recipe_example_dlp-3root}

First, we show how Theorem~\ref{thm:recipe_separation} can be used to prove that the discrete logarithm concept class~\cite{liu:dlp} (Section~\ref{subsec:dlp}) and the cubic root concept class~\cite{kearns:clt} (Section~\ref{subsec:rsa}) exhibit a learning separation.

\paragraph{The discrete logarithm concept class}
Following the notation in Theorem~\ref{thm:recipe_separation}, we can decompose the discrete logarithm concept class (see Definition~\ref{def:dlp}) by considering the functions 
\begin{align}
\label{eq:fg_dlp}
     f_i(x) = \begin{cases}+1, & \text{if }x \in [i, i + \frac{p-3}{2}],\\ -1, & \text{else,} \end{cases} \quad \text{ and } \quad  g_n(x) = a^x \mod p.
\end{align}

Specifically, the concept $c_i$ defined in Eq.~\eqref{eq:c_dlp} can be written as $c_i(x) = f_i(g_n^{-1}(x))$, providing the decomposition required by Theorem~\ref{thm:recipe_separation}.
Next, we will use Theorem~\ref{thm:recipe_separation} to show that the discrete logarithm concept class exhibits a $\mathsf{CC/QQ}$ learning separation.
In order to do so, we first have to establish that the above functions satisfy the requirements in Theorem~\ref{thm:recipe_separation}.

First, we discuss the efficient data generation requirement.
It is clear that both $f_i$ and $g_n$ can be computed efficiently on a classical computer.
This automatically leads to efficient data generation, since one can uniformly random pick $x \in \{0,1\}^n$ and compute the tuple $\big(g_n(x), f_i(x)\big)$, which is a valid example for the discrete logarithm concept class.

Next, we discuss the quantum advantage requirement. 
Firstly, one can efficiently compute $g_n^{-1}$ using Shor's algorithm for the discrete logarithm~\cite{shor:factoring}.
Secondly, in~\cite{blum:hardness} it is shown that computing the most-significant bit of the discrete logarithm on a $\frac{1}{2} + \frac{1}{\mathrm{poly}(n)}$ fraction of inputs is as hard as computing the most-significant bit of the discrete logarithm on all inputs.
Moreover, it is a widely-believed conjecture that computing the most-significant bit of the discrete logarithm is classically intractable.
Putting this together, we find that under this widely-believed conjecture there does not exist a classical algorithm  with runtime $\mathcal{O}(\mathrm{poly}(n, 1/\epsilon))$ that satisfies Equation~\eqref{eq:qalg_recipe}.
Finally, one can efficiently reconstruct $x$ from $\{f_j(x)\}_{j \in J}$ where the subset $J \subset [p]$ is chosen according to a procedure similar to binary-search.
More precisely, one finds an interval containing~$x$, and afterwards one repeatedly splits this interval in half (by looking at its overlap with another interval) while keeping track of which half contains~$x$.

Having shown that the functions $\{f_i\}_{i \in [p]}$ and $g_n$ satisfy the requirements listed in Theorem~\ref{thm:recipe_separation}, we can evoke Theorem~\ref{thm:recipe_separation} and conclude that the concept class $\{\mathcal{C}_n^{\mathrm{DLP}}\}_{n \in \mathbb{N}}$ -- whose concepts can be written as $c_i(x) = f_i(g_n^{-1}(x))$ -- exhibits a $\mathsf{CC/QQ}$ learning separation.
This shows how our theorem can be applied to streamlines one's approach to proving learning separations.

\paragraph{The cube root concept class}

Following the notation in Theorem~\ref{thm:recipe_separation}, we can decompose the cube logarithm concept class (see Definition~\ref{def:cube-root}) by considering the functions 
\begin{align}
\label{eq:fg_dlp}
     f_i(x) = \text{``the $i$-th bit of $x$''}, \quad \text{ and } \quad  g_n(x) = x^3 \mod N.
\end{align}

Specifically, the concept $c_i$ defined in Eq.~\eqref{eq:c_dlp} can be written as $c_i(x) = f_i(g_n^{-1}(x))$, providing the decomposition required by Theorem~\ref{thm:recipe_separation}.
Next, we will use Theorem~\ref{thm:recipe_separation} to show that the cube root concept class exhibits a $\mathsf{CC/QC}$ learning separation.
In order to do so, we again first have to establish that the above functions satisfy the requirements in Theorem~\ref{thm:recipe_separation}.

First, we discuss the efficient data generation requirement, which is completely analogous to the discrete logarithm concept class discussed above.
In particular, it is again clear that both $f_i$ and $g_n$ can be computed efficiently on a classical computer.
This automatically leads to efficient data generation, since one can uniformly random pick $x \in \{0,1\}^n$ and compute the tuple $\big(g_n(x), f_i(x)\big)$, which in this case is a valid example for the cube root concept class.

Next, we discuss the quantum advantage requirement. 
Firstly, following the standard attack on the RSA cryptosystem, using Shor's algorithm for integer factorization~\cite{shor:factoring} one can efficiently compute a specification $[g_n^{-1}]$ that allows for an efficient classical evaluation algorithm.
In particular, one can compute the unique $d^*$ that satisfies Eq.~\eqref{eq:d}, which is a specification of $g_n^{-1}$ that allows one to efficiently evaluate $g_n^{-1}$ using a classical evaluation algorithm.
Secondly, it is a widely-believed conjecture that computing $g_n^{-1}$ without this specification is classically intractable, even on an inverse polynomial fraction of inputs~\cite{kearns:clt}.
In conclusion, we again find that under this widely-believed conjecture there does not exist a classical algorithm  with runtime $\mathcal{O}(\mathrm{poly}(n, 1/\epsilon))$ that satisfies Equation~\eqref{eq:qalg_recipe}.
Finally, it is trivial that one can efficiently reconstruct any $x \in \{0,1\}^n$ from the set $\{f_i(x)\}_{i \in[n]}$ (i.e., one learns it bit by bit).

Having shown that the functions $\{f_i\}_{i \in [n]}$ and $g_n$ satisfy the requirements listed in Theorem~\ref{thm:recipe_separation}, we can evoke Theorem~\ref{thm:recipe_separation} and conclude that the concept class $\{\mathcal{C}_n^{\mathrm{root}}\}_{n \in \mathbb{N}}$ -- whose concepts can be written as $c_i(x) = f_i(g^{-1}(x))$ -- exhibits a $\mathsf{CC/QC}$ learning separation.
This again shows how our theorem can be applied to streamlines one's approach to proving learning separations.

\subsubsection{Example 2: the power of data}
\label{subsubsec:power_data}

In this section we discuss an example that shows the importance of the ``efficient data generation'' requirement in Theorem~\ref{thm:recipe_separation}.
Specifically, we will discuss a family of functions inspired by~\cite{huang:power} that from their description alone cannot be efficiently evaluated classically, yet access to a few examples (i.e., evaluations of the function) allows a classical computer to efficiently evaluate them.

Consider a polynomial-depth parameterized quantum circuit $U(\theta, \vec{\phi})$ -- with two types of parameters $\theta \in \mathbb{R}$ parameterizing a single gate and $\vec{\phi} \in \mathbb{R}^\ell$ parameterizing multiple other gates -- that is universal in the sense that for every polynomial-depth circuit $U$ there exists parameters $\vec{\phi}^* \in \mathbb{R}^\ell$ such that 
\[
U(0, \vec{\phi}^*)\ket{0^n} = U\ket{0^n}.
\]
Moreover, assume the gates in $U$ are of the form $\mathrm{exp}\left(-\frac{i\theta}{2}A \right)$, with $A^2 = I$ (e.g., $Z$- or $X$-rotations).
By measuring the output of the circuit we define a family of single parameter functions given by
\[
f_{\vec{\phi}}(\theta) = \bra{0^n}U(\theta, \vec{\phi})^\dagger M U(\theta, \vec{\phi})\ket{0^n}.
\]

Following an argument similar to~\cite{huang:power}, due to the universality of the parameterized quantum circuit no efficient randomized classical algorithm can compute the function $f_{\vec{\phi}}$ up to constant error in time $\mathcal{O}\left(\mathrm{poly}(n)\right)$, unless $\mathsf{BPP} = \mathsf{BQP}$.
Intuitively, one might thus think that the concept class $\{f_{\vec{\phi}}\}$ exhibits a separation between classical and quantum learners.
However, it turns out that the examples given to a classical learner radically enhance what it can efficiently evaluate.
In particular, when given a few of evaluations of $f_{\vec{\phi}}$ (for some fixed but arbitrary $\vec{\phi} \in \mathbb{R}^\ell$), a classical learner is suddenly able to efficiently evaluate the function.
To see this, note that by~\cite{nakanishi:pqc} one can write the functions as
\[
f_{\vec{\phi}}(\theta) = \alpha\cos(\theta - \beta) + \gamma, \quad \text{for }\alpha, \beta, \gamma \in \mathbb{R}{\color{blue}\footnotemark[14]},
\]
where the coefficients $\alpha, \beta$ and $\gamma$ are all independent of $\theta$ (but they do depend on $\vec{\phi}$).
From this we can see that any three distinct examples $\big\{\big(\theta_i, f_{\vec{\phi}}(\theta_i)\big)\big\}_{i=1}^3$ uniquely determine $f_{\vec{\phi}}(\theta)$ and one can simply fit $\alpha, \beta$ and $\gamma$ to these three examples to learn how to evaluate $f_{\vec{\phi}}$ on unseen points.
This approach can be generalized to settings with more than one free parameter $\theta$, by using the fact that expectation values of parameterized quantum circuits can be written as a Fourier series~\cite{schuld:pqc}.
Specifically, when the number of frequencies appearing in the Fourier series is polynomial, then a polynomial number of examples suffices to fit the Fourier series and learn how to evaluate the expectation value of the quantum circuits for an arbitrary choice of parameters.

From the perspective of Theorem~\ref{thm:recipe_separation}, the above setting exemplifies the importance of the ``efficient data generation'' requirement, which eliminates the possibility that data can radically enhance what a classical learner can efficiently evaluate.
Specifically, the examples needed for a classical learner to efficiently learn the above functions are such that only a quantum computer can generate them efficiently. 
In other words, these functions exemplify how hard to generate data can radically enhance what a classical learner can efficiently evaluate.
On the other hand, from the perspective of Theorem~\ref{thm:recipe_separation2} the above fails to exhibit a separation since these functions are precisely examples of functions that are in $\mathsf{P}/\mathsf{poly}$ and which are not known to be random verifiable.
Finally, note that for certain circuits it is possible to have exponentially many terms in the Fourier series~\cite{schuld:pqc, caro:encoding}, in which case it is unclear how to classically efficiently learn it.

\subsubsection{Example 3: exploiting the hardness of matrix inversion}
\label{subsubsec:matrix_inversion}

In this section, we discuss an example of a learning problem which intuitively should have the necessary ingredients to exhibit a learning separation, yet applying Theorem~\ref{thm:recipe_separation} will reveal important challenges that need to be solved if one wants a provable separation (if it in fact were to hold).
Namely, we discuss the challenges of building a learning separation from a computational separation.

We consider the problem of quantum linear systems $A x = b$, where $A$ is a $2^n \times 2^n$-complex valued matrix, and $b$ and $x$ are both complex-valued vectors of matching dimensions. 
It is known that if $A$ is specified via a sparse-access oracle{\color{blue}\footnotemark[14]}, then a quantum computer can evaluate expectation values of a quantum state proportional to the solution $x$ in time $\mathcal{O}\left(\mathrm{poly}(n) \right)$~\cite{harrow:qls}{\color{blue}\footnotemark[15]}.
This task is known to be $\mathsf{BQP}$-complete, even when $b$ is set to the first canonical basis vector  (i.e., $|0\rangle$ in Dirac notation).
Intuitively, this problem could thus be used as a basis for a learning separation.
\footnotetext[14]{Given as input a column index $j$, a sparse-access oracle specifies the values and entries of the nonzero entries in the $j^{\text{th}}$ column.}
\footnotetext[15]{There is another dependency on the condition number of $A$ which may render the process inefficient.}

To translate this computational separation into a learning separation, one could consider the learning problem with concepts: 
\begin{align}
\label{eq:linear_concept}
    c_{i}(z) = \mathrm{sign}\big(\mathrm{Tr}\big[\mathcal{O}^{(i)}_A \ket{z}\bra{z}\big] - \frac{1}{2}\big), \quad \mathcal{O}^{(i)}_A = \big(A^{-1}\big)^T\ket{0}\bra{0}_i A^{-1}
\end{align}
where $z \in \mathbb{R}^{2^n}$, $\ket{z} \propto \sum_{i = 1}^{2^n}z_i\ket{i}$ (usually we denote the input to the concept as $x$ not $z$, but in the context of linear systems $x$ is a more natural notation for the solution of the linear system),  $\ket{0}\bra{0}_i$ corresponds to measuring the $i$th qubit in the computational basis, and $A \in \mathbb{C}^{2^n \times 2^n}$ is a sparse{\color{blue}\footnotemark[16]} Hermitian matrix specified either as a sum of local terms or by an (efficient implementation) of a sparse-access oracle{\color{blue}\footnotemark[14]}.
In other words, the concepts are defined by a family of matrices $A$ (specifying linear systems), and the evaluation of the $i$th concept $c_i$ can be realized by first solving $x = A^{-1}|0\rangle$, and then computing the norm-squared of the inner product between the solution $x$ and the input $z$.
This is exactly the specification of the $\mathsf{BQP}$-hard problem of quantum linear systems:
\begin{itemize}
    \item \textbf{Input:\ } (i) A sparse{\color{blue}\footnotemark[16]} matrix $A \in \mathbb{C}^{2^n \times 2^n}$ specified either as a sum of local terms or by an efficient implementation of a sparse-access oracle, and (ii) an $n$-qubit register prepared in the quantum state $\ket{b}$.
    \item \textbf{Output:\ }An estimate of $\bra{x}M\ket{x}$ up to $\frac{1}{\mathrm{poly}(n)}$ accuracy, where $M = \ket{0}\bra{0} \otimes I$ corresponds to measuring the first qubit and $\ket{x} \propto A^{-1}\ket{b}$,
\end{itemize}
and so could intuitively offer the possibility of a learning separation.
\footnotetext[16]{There are at most $\mathrm{poly}(n)$ nonzero entries in each \emph{column}.}

To see if whether it exhibits a separation we attempt to apply Theorem~\ref{thm:recipe_separation}. 
First, we decompose the concepts defined in Eq.~\eqref{eq:linear_concept} as $c_i(z) = f_i(g^{-1}(z))$, where
\begin{align}
\label{eq:fg_lin}
    f_{i}(z) = \mathrm{sign}\big( \mathrm{Tr}\big[\ket{0}\bra{0}_i\ket{z}\bra{z} \big]\big) \quad \text{ and } \quad g_A(z) = A\ket{z}.
\end{align}
Note that Theorem~\ref{thm:recipe_separation} requires any decomposition, and this one seems most natural.
Having chosen our decomposition, we discuss which criteria of Theorem~\ref{thm:recipe_separation} are satisfied.
Regarding efficient data generation, note that for general $z \in \mathbb{R}^{2^n}$ it is classically intractable to compute $f_{i}$ and $g_A(z)$, yet the Theorem requires this for efficient example generation.
One way to circumvent this is by considering sparse $z \in \mathbb{R}^{2^n}$ specified as a list $\{(i, z_i) \mid z_i \neq 0\}$, for which both $f_{i}$ and $g_{A}$ are efficiently computable on a classical computer.
That is, one can efficiently generate examples $\big(A\ket{z}, f_i(z)\big)$ for sparse $z \in \mathbb{R}^{2^n}$.
However, for the case where the data only consists of sparse $x$, it is unclear whether the quantum advantage criteria are met.
In particular, to the best of our knowledge it is unknown whether a classical algorithm can efficiently solve the matrix inversion problem on states of the form $A\ket{z}$ for sparse $z \in \mathbb{R}^{2^n}$.
Note that the proof of $\mathsf{BQP}$-hardness of matrix inversion requires one to apply the inverse to computational basis states (see \cite{harrow:qls}), which need not be of the form $A\ket{z}$ for sparse $z \in \mathbb{R}^{2^n}$.
Therefore, to see whether the concepts in Eq.~\eqref{eq:linear_concept} exhibit a learning separation following Theorem~\ref{thm:recipe_separation}, one first has to determine whether a classical algorithm can efficiently solve the strictly easier version of the linear systems problem where the input quantum states are of the form $\ket{b} = A\ket{z}$ for sparse $z\in \mathbb{R}^{2^n}$. 
It is possible that for this special case, efficient classical solutions exist, in which case this approach will not lead toward a separation. 
However, if the quantum linear systems problem is classically hard even under this assumption, then this would put us on the right track towards a provable learning separation.
Note that in the above discussion we have set everything up to enforce efficient example generation (i.e., random verifiability), so it is really Theorem~\ref{thm:recipe_separation} that we are concerned with (as opposed to Theorem~\ref{thm:recipe_separation2}).

In short, Theorem~\ref{thm:recipe_separation} tells us that to establish a learning separation it is important to carefully balance the efficient data generation and the quantum advantage criteria in Theorem~\ref{thm:recipe_separation}.
Specifically, taking a $\mathsf{BQP}$-complete problem and translating this into a learning separation is far from trivial, as examples can radically enhance what a classical computer can evaluate (see also Section~\ref{subsubsec:power_data}).
One way to counteract this is by ensuring that the data is efficiently generatable, but as discussed above one might lose the potential quantum advantage while doing so.

\subsubsection{Example 4: Hamiltonian learning}
\label{subsubsec:ham_learning}

In this section, we apply Theorem~\ref{thm:recipe_separation} to the problem of Hamiltonian learning.
Since for this problem the data comes from quantum experiments, it intuitively could be a good candidate for a learning separation.
We study this using Theorem~\ref{thm:recipe_separation} to elucidate bottlenecks in proving such a separation.

In Hamiltonian learning one is given measurement data from a quantum experiment, and the goal is to recover the Hamiltonian that best matches the data.
Throughout the literature, various different types of measurement data have been considered.
For example, it could be measurement data from ground states, other (non-zero) temperature thermal sates, or time-evolved states.
In our case, the data will be measurement data from time-evolved states.
In particular, we formulate Hamiltonian learning in terms of a concept class as follows.
First, we fix a (polynomially-sized) set of Hermitian operators $\{H_\ell\}_{\ell = 1}^L$.
Next, we consider a family of Hamiltonians $\{H_\beta\}_{\beta \in \mathbb{R}^L}$, where
\[
    H_\beta = \sum_{\ell = 1}^L \beta_\ell H_\ell.
\]
Finally, we define the concept class $\mathcal{C}^{\mathrm{HL}} = \{c_\beta\}_{\beta \in \mathbb{R}^L}$, with concepts defined as
\begin{align}
\label{eq:hl_concepts}
    c_\beta(z, t) = \mathrm{sign}\big(\mathrm{Tr}\big[U^\dagger(t)\rho_z U(t)O_z\big]\big), \quad U(t) = e^{it H_\beta}.
\end{align}
Here $z$ describes the experimental setup, specifying the starting state (that will evolve under $H_\beta$ for time $t$) and the observable measured at the end.
A natural specification of the concepts that a learner could output are the parameters $\beta$.
In the general case, only a quantum evaluation algorithm is able to evaluate the concepts if specified by the parameters $\beta$ (assuming $\mathsf{BQP} \not= \mathsf{BPP}$), and therefore only a $\mathsf{CC/QQ}$ separation is possible.
However, in certain restricted cases there are also efficient classical algorithms that can evaluate the concepts if specificied by the parameters $\beta$, in which case a $\mathsf{CC/QC}$ separation is possible.
Moreover, it has been shown that in certain cases a classical learner can efficiently learn $\mathcal{C}^{\mathrm{HL}}$~\cite{haah:ham_learning}, in which case no separation is possible at al.

One can look at this problem from both the perspectives of Theorem~\ref{thm:recipe_separation} and Theorem~\ref{thm:recipe_separation2}. 
The biggest challenge when trying to apply Theorem~\ref{thm:recipe_separation} to the problem of Hamiltonian learning is the ``efficient data generation'' requirement.
Specifically, this requirement is not satisfied in the general case (assuming $\mathsf{BQP} \not= \mathsf{BPP}$).
However, as discussed earlier, learning separations could very well be possible without the examples being efficiently generatable.
In particular, to construct a learning separation, it is sometimes explicitly assumed that the examples are hard to generate~\cite{brien:ham_learning}. 
More precisely, in~\cite{brien:ham_learning} the authors focused on determining whether learning is even possible at all for hard-to-simulate Hamiltonians due to the tendency of correlators in such Hamiltonians to vanish. 
The authors also suggested that a learning separation may naturally occur whenever the data itself is hard to generate.
While the data being hard to generate does imply that the method of fitting a Hamiltonian to the training data (i.e., evaluating a Hamiltonian, and adjusting it according to how well it matches the training data) is hard to perform classically, it does not necessarily imply a learning separation based on Theorem~\ref{thm:recipe_separation}.
Specifically, since the ``efficient data generation'' requirement is not satisfied, one has to worry that access to data can radically enhance what a classical learning can efficiently evaluate (see also Section~\ref{subsubsec:power_data}).
Thus, what remains to be shown for a learning separation (assuming that data generation is hard), is that a classical learner cannot evaluate the concepts in Eq.~\eqref{eq:hl_concepts}, even when having access to examples.
As mentioned earlier, one can also look at this problem from the perspective of Theorem~\ref{thm:recipe_separation2}.
When doing so, one way to argue that the criteria of Theorem~\ref{thm:recipe_separation2} are satisfied is to prove that the concept are $\mathsf{BQP}$-hard.
Specifically, if the concepts are indeed $\mathsf{BQP}$-hard, then they are likely outside $\mathsf{HeurP}/\mathsf{poly}$, in which case Theorem~\ref{thm:recipe_separation2} can imply a learning separation (without requiring efficient example generation).

\section{Conclusion}
\label{sec:conclusion}

Finding practically relevant examples of learning problems where quantum computers hold a provable edge over classical approaches is one of the key challenges for the field of quantum machine learning. 
Yet, although the field is very popular and prolific we have found no true examples (barring the cryptanalytically-motivated contrived examples we discussed), at least not in the cases where the data is classical.
One of the issues is that proving separations for learning is, as we argue, in some ways even harder than proving already challenging computational separations, and they are sensitive to subtle changes in definitions.
Moreover, it is rather cumbersome to ensure a provable separation, as many separate things need to be proven.
However, the fact that proofs of separations may be hard does not prohibit there being many important cases where a separation does exist, even if the proof eludes us. 
Note, all proofs we discussed will as a necessary ingredient require a proof that certain functions are classically heuristically hard. 
While we may conjecture this is the case for many functions, there are not many proofs known to date.
This however does not prohibit us to study the ``gray areas'', where we can at least have clear statements regarding the (simpler) conjectures and assumptions a learning separation may  potentially hinge on.

To shed some light on this challenge, we studied what made existing separations between classical and quantum learners possible, and what makes others hard to prove.
Specifically, in Theorem~\ref{thm:recipe_separation} and Theorem~\ref{thm:recipe_separation2} we distilled two sets of general and sufficient conditions (i.e., two ``checklists'') that allow one to identify whether a particular problem has the ingredients for a separation, or to elucidate bottlenecks in proving this separation.
The checklist in Theorem~\ref{thm:recipe_separation} is based on random-verifiable functions (i.e., when data can be efficiently generated) and captures the two known learning separations~\cite{liu:dlp, servedio:q_c_learnability}.
The checklist in Theorem~\ref{thm:recipe_separation2} is based on certain additional assumptions involving a number of computational complexity classes with the aim of being applicable to the setting where the data is generated by a quantum experiment.
Afterwards, to illustrate the usefulness of Theorem~\ref{thm:recipe_separation} and Theorem~\ref{thm:recipe_separation2}, we applied them to four examples of potential learning separations.
First, we showed that Theorem~\ref{thm:recipe_separation} indeed captures the learning separations of~\cite{liu:dlp} and~\cite{servedio:q_c_learnability}.
Next, we showed how our theorems can elucidate bottlenecks in proving other potential separations.
In particular, we showed that our theorems highlight the fact that data can radically enhance what a classical learner can evaluate, and how to circumvent this.
Moreover, we showed how our theorems elucidate the challenges when trying to establish learning separations for problems which intuitively should lend itself to a quantum advantage (e.g., when the concepts are built from a computational separation, or when the data comes from a quantum experiment).

In this note we discussed provable separations between classical and quantum learners in the probably approximately correct (PAC) learning framework, and it may be worthwhile to reflect on what these results may imply about the real-world applicability of quantum machine learning.
In general, there is a significant gap between the PAC framework and machine learning in practice.
One studies the hardness of abstract learning problems, whereas the other deals with the challenges of solving particular real-world learning tasks. 
Specifically, the idea of having a naturally defined meaning of ``scaling of the instance size'' is often not satisfied.
In practice, we often apply and benchmark learning algorithms on \textit{fixed-size} datasets, and often worry about generalization performances of those particular learners, based on the sizes of the training sets which are actually available. 
In contrast, in the PAC framework one is concerned with learnability in general, and the results of this note can only offer some intuition on what kinds of problems may be better suited for quantum machine learning.
However, for any fixed size real world problem, it simply may be the case that a particular classical (or quantum) learning algorithm outperforms others\footnote{Indeed, this theoretically must happen due to no-free-lunch Theorems~\cite{wolpert:nfl}}.
Further, quantum machine learning may offer (or fail to offer) other flavours of advantages over particular classical learners (such as faster training for the same hypothesis family, better preprocessing, and so on). 
At the same time, theoretical learning separations for say learning from quantum-generated data do not imply that in practice classical algorithms are outperformed by quantum algorithm given a particular size or scaling of datasets (recall, it is not meaningful to discuss whether a given dataset is exponentially or polynomially-sized without having set a notion of scaling of the instance size).
Moreover, even if such a scaling is defined, and even if we do talk about a problem which is technically not PAC learnable, then we can still apply the best algorithms we have to it and obtain \textit{some} level of performance, and the PAC framework will not tell us what happens in any such particular case.
That is to say, ``real-world machine learning'' is necessarily an experiment-driven field.
Quantum versus classical learning separations of the types we discuss in this paper are thus certainly not the end-all of discussions regarding quantum machine learning advantage.
This is especially true in this phase of quantum machine learning where we cannot yet run quantum machine learning experiments at sufficiently large sizes to convincingly compare it to state-of-the-art classical machine learning.
However, the presented type of theoretical discussion on separations may provide insights to guide us in these early days of quantum machine learning.

\paragraph{Acknowledgements} 
The authors thank Simon Marshal, Srinivasan Arunachalam, and Tom O'Brien for helpful discussions.
This work was supported by the Dutch Research Council (NWO/ OCW), as part of the Quantum Software Consortium programme (project number 024.003.037).

\bibliographystyle{alpha}
\bibliography{main}

\appendix

\section{Details regarding definitions}
\label{appendix:details}

\subsection{Constraining hypothesis classes to those that are polynomially evaluatable}
\label{appendix:poly_eval}

In this section, we discuss why one has to restrict the hypothesis class to be polynomially evaluatable in computational learning theory.
Specifically, we show that every concept class that is learnable in superpolynomial time is also learnable in polynomial time if we allow the learner to use hypotheses that run for superpolynomial time.
In other words, if we do not restrict the hypothesis class to be polynomially evaluatable, then the restriction that the learning algorithm has to run in polynomial time is vacuous (that is, it imposes no extra restrictions).
For more details we refer to~\cite{kearns:clt}.

Consider a concept class $\{\mathcal{C}\}_{n \in \mathbb{N}}$ that is learnable by a superpolynomial time learning algorithm $\mathcal{A}$ using some hypothesis class $\{\mathcal{H}\}_{n \in \mathbb{N}}$.
To show that this concept class is also learnable in polynomial time, consider the hypothesis class $\{\mathcal{H}'\}_{n \in \mathbb{N}}$ whose hypotheses are enumerated by all possible polynomially-sized sets of training examples.
Each hypothesis in $\{\mathcal{H}'\}_{n \in \mathbb{N}}$ simply runs the learning algorithm $\mathcal{A}$ on its corresponding set of examples, and it evaluates the hypothesis from $\{\mathcal{H}\}_{n \in \mathbb{N}}$ that the learning algorithm outputs based on this set of examples.
Finally, consider the polynomial-time learning algorithm $\mathcal{A}'$ that queries the example oracle a polynomial number of times and outputs the specification of the hypothesis in $\{\mathcal{H}'\}_{n \in \mathbb{N}}$ that corresponds to the obtained set of examples.
By construction, this polynomial-time learning algorithm $\mathcal{A}'$ now learns $\{\mathcal{C}\}_{n \in \mathbb{N}}$.

\subsection{Proof of Lemma~\ref{lemma:cq=qq}}

\CQequalsQQ*

\begin{proof}

Since any efficient classical algorithm can be simulated using an efficient classical algorithm it is obvious that $\mathsf{CQ} \subseteq \mathsf{QQ}$.
For the other inclusion, let $L = (\{\mathcal{C}_n\}_{n \in \mathbb{N}}, \{\mathcal{D}_n\}_{n \in \mathbb{N}}) \in \mathsf{QQ}$.
That is, the concept class $\{\mathcal{C}_n\}_{n \in \mathbb{N}}$ is learnable under distribution $\{\mathcal{D}_n\}_{n \in \mathbb{N}}$ by an efficient quantum learning algorithm $\mathcal{A}^q$ using a quantum polynomially evaluatable hypothesis class $\{\mathcal{H}_n\}_{n \in \mathbb{N}}$. 
To show that $L \in \mathsf{CQ}$, consider the quantum polynomially evaluatable hypothesis class $\{\mathcal{H}'_n\}_{n \in \mathbb{N}}$ whose hypotheses are enumerated by all possible polynomially-sized sets of training examples.
Each hypothesis in $\{\mathcal{H}'_n\}_{n \in \mathbb{N}}$ runs the quantum learning algorithm $\mathcal{A}^q$ on its corresponding set of examples, and evaluates the hypothesis that the quantum learning algorithm outputs based on the set of examples.
Finally, consider the classical polynomial-time learning algorithm $\mathcal{A}'$ that queries the example oracle a polynomial number of times and outputs the specification of the hypothesis in $\{\mathcal{H}'_n\}_{n \in \mathbb{N}}$ that corresponds to the obtained set of examples.
By construction, this classical polynomial time algorithm $\mathcal{A}'$ can learn the concept class $\{\mathcal{C}_n\}_{n \in \mathbb{N}}$ under distribution $\{\mathcal{D}_n\}_{n \in \mathbb{N}}$ using the quantum polynomially evaluatable hypothesis class $\{\mathcal{H}'_n\}_{n \in \mathbb{N}}$. 
This shows that $L \in \mathsf{CQ}$.

\end{proof}

\section{Proof of Theorem~\ref{thm:recipe_separation}}
\label{appendix:proof}

\recipe*

\begin{proof}

First, we will show that $\{\mathcal{C}\}_{n \in \mathbb{N}}$ is not efficiently classically learnable under the assumptions listed in the theorem.
To do so, we suppose that a classical learner can efficiently learn $\{\mathcal{C}\}_{n \in \mathbb{N}}$, and then show that this implies that there exists a classical algorithm that can efficiently invert $g_n$ on a $1 - 1/\mathrm{poly}(n)$ fraction of inputs (which violates one of the assumptions listed in the theorem).

Suppose a classical learner $\mathcal{A}_{\mathrm{learn}}$ can efficiently learn $\{\mathcal{C}_n\}_{n \in \mathbb{N}}$ under distribution $\{\mathcal{D}_n^g\}_{n \in \mathbb{N}}$. 
Since we can efficiently implement the example oracle $EX(c', \mathcal{D}_n^g)$ for every $c' \in \mathcal{C}'_n$\footnote{here $\mathcal{C}'_n$ is the subset described in the final bullet point of ``classical intractability''}, we can use $\mathcal{A}_{\mathrm{learn}}$ to find for every $c' \in \mathcal{C}'_n$ some hypothesis $h'$ such that 
\begin{align*}
    \mathbb{P}_{x \sim \mathcal{D}_n^g} \big[h'(x) \neq c'(x) \big] \leq \epsilon'.
\end{align*}
Recall that there exists some polynomial-time classical algorithm $\mathcal{B}$, that for every $x \in \mathcal{X}_n$ maps
\[
\{c'(y) \mid c'\in \mathcal{C}'_n(x),\text{ }y \in \mathcal{X}'_n(x) \} \mapsto g_n^{-1}(x).
\]
By making $\epsilon'$ small enough (yet still inverse-polynomial), we have for $x \sim \mathcal{D}_n^g$ the probability that 
\[
\{h'(y) \mid c'\in \mathcal{C}'_n(x),\text{ }y \in \mathcal{X}'_n(x) \} \neq
\{c'(y) \mid c'\in \mathcal{C}'_n(x),\text{ }y \in \mathcal{X}'_n(x) \}
\]
is at most some inverse-polynomial $\epsilon$.
Thus, if we apply $\mathcal{B}$ to $\{h'(y) \mid c'\in \mathcal{C}'_n(x),\text{ }y \in \mathcal{X}'_n(x) \}$, then we obtain a classical algorithm $\mathcal{A}$ that satisfies Eq.~\eqref{eq:qalg_recipe}.
Since we assumed no such classical algorithm exists, this leads a contradiction.
In other words, we can conclude that no classical learner is able to efficiently learn $\{\mathcal{C}_n\}_{n \in \mathbb{N}}$ under distribution $\{\mathcal{D}_n^g\}_{n \in \mathbb{N}}$.

Next, we will show that $\{\mathcal{C}_n\}_{n \in \mathbb{N}}$ is learnable in either $\mathsf{QQ}$ or $\mathsf{QC}$ (depending on which criteria are met).
Note that for every $c(x) = f(g_n^{-1}(x)) \in \mathcal{C}_n$, if we are given an example oracle $EX(c, \mathcal{D}^g_n)$, then using our quantum algorithm for $g^{-1}_n$ (or the efficient classical evaluation algorithm) we can turn this into an efficient example oracle $EX(f, \mathcal{D}_n)$ by applying $g_n^{-1}$ to the first coordinate of every example (i.e., we map $(x, c(x)) \mapsto (g_n^{-1}(x), c(x))$).
Next, we use this oracle together with the $\mathsf{QQ}$ or $\mathsf{QC}$ learner (depending on which criteria are met) to efficiently learn which $f$ is generating the $EX(f, \mathcal{D}_n)$ examples.
Finally, note that if we compose the hypothesis that we obtained from learning $f$ with the efficient quantum algorithm for $g^{-1}_n$ (or the efficient classical evaluation algorithm), then we obtain an hypothesis that is close enough to the hypothesis $c$ generating the $EX(c, \mathcal{D}^g_n)$ examples.
In other words, we obtain either a $\mathsf{QQ}$ or $\mathsf{QC}$ learner for $\{\mathcal{C}_n\}_{n \in \mathbb{N}}$, depending on whether we concatenate the hypothesis for $f$ with the efficient quantum algorithm for $g_n^{-1}$ or with the efficient classical evaluation algorithm (which takes as input $[g_n^{-1}]$), respectively.

\end{proof}

\section{Proof of Theorem~\ref{thm:recipe_separation2}}
\label{appendix:recipe2}

\recipetwo*

\begin{proof}[Proof-sketch]

First, we show that no classical learner can efficiently learn $\{\mathcal{C}_n\}_{n \in \mathbb{N}}$.
If a classical learner can efficiently learn $\mathcal{C}_n$, then there exist sets of examples that when given to the efficient classical learner will result in a hypothesis that is close to the concept generating the examples.
In particular, these sets of examples can be turned into as an advice string that when given to the classical learning algorithm allows it to compute the set on the right-hand side of Eq.~\eqref{eq:set_g}~\cite{huang:power}.
In other words, from Criterion 3 it now follows that the classical learning algorithm, together with the advice string consisting of the sets of examples, is able to efficiently evaluate $g_n^{-1}$.
This clearly contradicts Criterion 1, which establishes that no classical learner can efficiently learn $\{\mathcal{C}_n\}_{n \in \mathbb{N}}$.

Next, we show that there exists an efficient quantum learner that can learn $\{\mathcal{C}_n\}_{n \in \mathbb{N}}$
To see this, note that since $\{g_n^{-1}\}_{n\in \mathbb{N}}$ is in $\mathsf{BQP}$ we can efficiently turn any example $(x, c(x))$ into an example $(y, f(y))$ by computing $y = g_n^{-1}(x)$.
Afterwards, using the examples $(y, f(y))$ we apply Criterion 2 to efficiently learn $\{\mathcal{F}_n\}_{n \in \mathbb{N}}$.
Finally, we concatenate the hypothesis output by the learning algorithm for $\{\mathcal{F}_n\}_{n \in \mathbb{N}}$ with the efficient quantum algorithm for computing $\{g_n^{-1}\}_{n\in \mathbb{N}}$, which turns it into a valid hypothesis for $\{\mathcal{C}_n\}_{n \in \mathbb{N}}$ (showing that it is indeed efficiently quantum learnable).
\end{proof}
\end{document}